\crefname{hypothesis}{Hypothesis}{Hypotheses}
\newcommand{\norm}[1]{\left\lVert#1\right\rVert}
\newcommand{\abs}[1]{\left\lvert#1\right\rvert}
\def\BState{\State\hskip-\ALG@thistlm}
\def\x{{\mathbf x}}
\def\d{{\mathbf d}}
\def\u{ \mathbf{u}}
\def\y{{\mathbf y}}
\def\z{{\mathbf z}}
\def\u{{\mathbf u}}
\def\U{{\mathcal U}}
\def\D{{\mathbf D}}
\def\d{{\mathbf d}}
\def\A{{\mathbf A}}
\def\K{{\mathbf K}}
\def\e{{\mathbf e}}
\def\B{{\mathbf B}}
\def\W{{\mathbf W}}
\def\b{{\mathbf b}}
\def\s{{\mathbf s}}
\def\E{{\mathbb E}}
\def\I{{\mathbf I}}
\def\0{{\mathbf 0}}
\def\1{{\mathbf 1}}
\def\alfa{{\boldsymbol \alpha}}
\def\gama{{\boldsymbol \gamma}}
\def\O{{\boldsymbol \Omega}}
\def\Phib{{\boldsymbol \Phi}}
\DeclareMathOperator{\Trace}{Trace}
\DeclareMathOperator{\argmin}{argmin}
\DeclareMathOperator{\argmax}{argmax}
\DeclareMathOperator{\st}{s.t.}
\DeclareMathOperator{\rank}{rank}
\title{Multi-Layer Sparse Coding: the Holistic Way}
\author{Aviad Aberdam\thanks{Electrical Engineering Department, Technion
Israel Institute of Technology. (\email{ aaberdam@campus.technion.ac.il}).}
\and Jeremias Sulam\thanks{Computer Science Department, Technion Israel Institute of Technology.
(\email{jsulam@cs.technion.ac.il}, \email{ elad@cs.technion.ac.il}).}
\and Michael Elad\footnotemark[3]}
\begin{document}

\maketitle

\begin{abstract}
The recently proposed multi-layer sparse model has raised insightful connections between sparse representations and convolutional neural networks (CNN). In its original conception, this model was restricted to a cascade of \emph{convolutional synthesis} representations. In this paper, we start by addressing a more general model, revealing interesting ties to fully connected networks. We then show that this multi-layer construction admits a brand new interpretation in a unique symbiosis between synthesis and analysis models: while the deepest layer indeed provides a synthesis representation, the mid-layers decompositions provide an analysis counterpart. This new perspective exposes the suboptimality of previously proposed pursuit approaches, as they do not fully leverage all the information comprised in the model constraints. Armed with this understanding, we address fundamental theoretical issues, revisiting previous analysis and expanding it. Motivated by the limitations of previous algorithms, we then propose an integrated -- \emph{holistic} -- alternative that estimates all representations in the model simultaneously, and analyze all these different schemes under stochastic noise assumptions. Inspired by the synthesis-analysis duality, we further present a Holistic Pursuit algorithm, which alternates between synthesis and analysis sparse coding steps, eventually solving for the entire model as a whole, with provable improved performance. Finally, we present numerical results that demonstrate the practical advantages of our approach.
\end{abstract}

\begin{keywords}
Sparse Representations, Multi-Layer Representations, Sparse Coding, Analysis and Synthesis Priors, Neural Networks.
\end{keywords}

\begin{AMS}
  65F10, 65F20, 65F22, 68W25, 62H35, 47A52, 65F50, 62M45
\end{AMS}

\section{Introduction}

Sparse representation is one of the most popular priors in signal and image processing, leading to remarkable results in various applications \cite{elad2006image,dong2011image,zhang2010discriminative,jiang2011learning,mairal2014sparse}. In its most popular interpretation, this model assumes that a natural signal, represented by the vector $\x \in\mathbb{R}^n$, can be \emph{synthesized} as a linear combination of only a few columns, or atoms, from a matrix $\D\in\mathbb{R}^{n\times m}$, termed a dictionary. In other words, $\x=\D\gama$, where the vector $\gama\in\mathbb{R}^{m}$ is sparse: only a few of its entries are non-zeros, which is indicated by its low $\ell_0$ (pseudo-)norm, $\|\gama\|_0 \ll n$. 

In \cite{bristow2013fast,kavukcuoglu2010fast,heide2015fast,gu2015convolutional,papyan2017working}, this general model was deployed in a Convolutional Sparse Coding (CSC) form, in which the dictionary $\D$ is given by a union of banded and circulant matrices. 
More recently, this CSC model has been extended to a multi-layer version in \cite{papyan2016convolutional}. This construction, termed Multi-Layer Convolutional Sparse Coding (ML-CSC), raises particular interest because of its tight connection to deep learning. Somewhat surprisingly, under this model assumption, the forward pass of a CNN can be interpreted as a pursuit algorithm searching for the respective sparse representations of a given input signal \cite{papyan2016convolutional}. As a result, this provides a promising framework for a theoretical study and analysis of deep learning architectures.

In its original formulation \cite{papyan2016convolutional,sulam2017multi}, this multi-layer model was interpreted as a cascade of synthesis sparse representations. More precisely, every intermediate layer in this model wears two hats: it provides a sparse representation for the previous layer, while also acting as a signal for the subsequent layer. 
This perception has led the authors of \cite{papyan2016convolutional} to propose synthesis-oriented pursuit algorithms that proceed in a layer by layer fashion, propagating the signal from the input to the deepest layer \cite{papyan2016convolutional}. Alternatively, one may adopt a projection approach by seeking for the deepest representation and then propagating it back towards shallower layers \cite{sulam2017multi}. In this paper, we revisit these algorithms more broadly, adopting fully connected layers, providing more general constructions that are not restricted to the convolutional case. 

As we will carefully show, the above pursuit algorithms suffer from several caveats as neither approach can fully leverage all the information in the model. The layer-wise approach provides representations with increasing deviations from the input signal. The projection variant resolves this issue, though it condenses to a traditional global synthesis model that fails to explicitly employ the information represented in the intermediate layers. In addition, the analysis and performance of these methods rely on the intermediate dictionaries being sparse, thus enabling sparse intermediate decompositions. We will show that this does not have to be the case, and one can indeed consider more general dictionaries while still allowing for signals in the model. Alas, the above algorithms collapse in such cases and fail to retrieve the corresponding representations even in noiseless (ideal) cases.

Motivated by these observations, and aiming to effectively resolve these problems, we give the multi-layer sparse model a brand-new interpretation. The key observation is that the ML-SC model provides a unique integration between two types of sparse models: synthesis and analysis \cite{elad2007analysis,selesnick2009signal,nam2011cosparse,nam2013cosparse,rubinstein2013analysis}. As explained above, the synthesis sparse model assumes that a signal $\x$ can be expressed as $\D\gama$, with $\gama$ being sparse. The analysis counterpart -- a somewhat more recent and less glaring variant -- states that a signal $\x$ provides a sparse representation after being multiplied by an analysis dictionary, $\O\in\mathbb{R}^{m\times n}$. In this way, $\|\O\x\|_0 = m-\ell$, where the number of zeros, $\ell$, refers to the cardinality of the \emph{cosupport} of $\x$, termed \emph{cosparsity}. With this understanding, we will show that while the outer or global shell of the ML-SC model provides a synthesis representation for a given signal, the intermediate representations enforce an analysis prior on the deepest representation. 
 
This new synthesis-analysis conception leads us to re-formulate and expand on several theoretical questions, such as: When is the model valid? More precisely, are there signals that admit all model constraints? What is the ML-SC signal space? How can we synthesize an ML-SC signal under this interpretation? How can uniqueness guarantees be improved based on these extra constraints? The answers to these questions will shed light on the achievable sparsity bounds of the intermediate representations. Interestingly, our analysis shows that these should, in fact, not be \emph{too sparse}. This will free the restriction of employing sparse matrices as the intermediate dictionaries, on the one hand, and provide a closer behavior to what is observed in practical deep neural networks, on the other. 

Driven by the the limitations of previous pursuit algorithms and the offered analysis-synthesis perspective, we turn to define a novel pursuit approach. Our proposed method seeks to estimate all representations in the model simultaneously, for which we dubbed it a \emph{Holistic} pursuit. We first analyze the performance of its \emph{oracle} estimator (i.e., having knowledge of the underlying representation supports), and compare it with the corresponding estimators of the previous layer-wise and projection alternatives. 

We then propose a practical Holistic Pursuit algorithm, which iteratively builds on the intermediate layer supports while refining the global, synthesis, representation -- improving on it at every step. To this end, this algorithm alternates between a synthesis-type sparse coding of the deepest layer, and an analysis-like estimation of the mid-layers supports using the deepest layer estimation. This holistic approach leverages the synergy in the across different layers, resulting in improved provable recovery guarantees and performance bounds.

This paper is organized as follows: In \Cref{Sec:Background} we review the basics of the sparse representations model. \Cref{Sec:BackgroundMLSC} then introduces previous theoretical claims for the ML-CSC model and adapts them to a more general (non-convolutional) case.
In \Cref{Sec:Synthesis_Interpretation_Issues} we demonstrate the limitations of the existing multi-layer synthesis interpretation, and introduce an analysis perspective to these constructions. We undertake the study of uniqueness guarantees in light of the synthesis-analysis duality in \Cref{Sec:Uniqueness}, and present a holistic approach for the pursuit of these representations.
\Cref{Sec:Oracle} provides the analysis of the Oracle estimators for the different pursuit approaches under random noise assumption, while in \Cref{Sec:HolisticPursuit} we present a new pursuit algorithm for the ML-SC model, the \emph{Holistic Pursuit}, which we demonstrate with numerical experiments in \Cref{Sec:Experiments}. We finally conclude in \Cref{Sec:conclusions}, delineating further research directions.

\subsection{Notations}
Vectors and matrices are denoted by bold lower and upper case letters, respectively. 
$\x^{\Lambda}$ denotes the vector in $\mathbb{R}^{\abs{\Lambda}}$ that carries the entries of $\x$ indexed by $\Lambda$. Naturally, $\Lambda^c$ will denote the complement of the set $\Lambda$. Similarly, when $\D$ is a matrix in $\mathbb{R}^{n \times m}$ and $\Lambda_c \subseteq \{1, \ldots,m\}$, $\D^{\Lambda_c}$ is the sub-matrix in $\mathbb{R}^{n \times \abs{\Lambda_c}}$ whose columns are those of $\D$ indexed by $\Lambda_c$. 
If $\Lambda_r \subseteq \{1, \ldots,n\}$, then the matrix $\D^{\Lambda_r, \Lambda_c}$ is the matrix in $\mathbb{R}^{\abs{\Lambda_r} \times \abs{\Lambda_c}}$ whose rows are those of $\D^{\Lambda_c}$ indexed by $\Lambda_r$. We will further denote by $\D^{\Lambda_r,\mathcal{I}}$ the matrix containing the rows indexed by $\Lambda_r$ across all $m$ columns of $\D$, where $\mathcal{I}$ denotes the index set $\mathcal{I} = [1,m]$.


\section{Sparse Representation Modeling Background}
\label{Sec:Background}

Many natural images and signals have been observed to be inherently low dimensional despite their possibly very high ambient dimension. Sparse representations offers an elegant and clear way to model such inherent low-dimensionality by assuming that the signal $\x \in \mathbb{R}^{n}$ belongs to a finite (yet, huge) union of $s$ ($\ll n$) dimensional subspaces \cite{lu2008theory}. This general idea comes in two forms: the traditional and very popular synthesis approach \cite{elad2010exact}, and the newer and complementary analysis sparse model \cite{nam2011cosparse,nam2013cosparse}, which we review next.

\subsection{The Synthesis Model}

Synthesis sparse representations is a signal model that assumes that natural signals can be represented, or well approximated, by a linear combination of a few basic components, termed atoms. Formally, such a signal $\x \in \mathbb{R}^{n}$ can be expressed as $\x = \D\gama$, where $\D \in \mathbb{R}^{n\times m}$ is a dictionary containing signal atoms as its columns, and the vector $\gama \in \mathbb{R}^{m}$ contains only a few non-zero entries. The cardinality of a vector is measured by the $\ell_0$ pseudo-norm, $\|\gama\|_0$.
Typically, we are interested in the case of redundant dictionaries, i.e. $m>n$, allowing for very sparse representations.

The synthesis inverse problem aims to recover the sparse representation $\gama$ from a noisy signal observation $\y=\x+\e=\D\gama+\e$, where $\e$ is a noise vector and the dictionary $\D$ is assumed given. This task is often called sparse coding, or simply pursuit, and can be formally written\footnote{For the sake of simplicity and to avoid introducing more notation, we will employ hereafter the same variable to denote the ground truth and the running variable we are optimizing over - these are not to be mixed up.} as \cite{donoho2003optimally,tropp2004greed,elad2010exact}:
\begin{equation} \label{P_0_epsilon_definition}
\left( P_0 \right): ~ \min_{\gama} \norm{\gama}_0~ \textrm{s.t.} ~ \norm{\D \gama - \y}_2 \leq \epsilon .
\end{equation}
Since solving the problem in \Cref{P_0_epsilon_definition} is an NP-hard in general \cite{gribonval2003sparse}, one can use greedy strategies like Orthogonal Matching Pursuit (OMP) \cite{pati1993orthogonal} or the thresholding algorithm \cite{tropp2004greed,elad2010exact} to approximate its solution.
Alternatively, one can also use a convex relaxation of this pursuit by replacing the $\ell_0$ norm with the convex $\ell_1$. In the latter case, the resulting problem, termed Basis-Pursuit, is defined formally as \cite{chen2001atomic,donoho2003optimally,tropp2006just}:
\begin{equation}
\left( P_1 \right): ~ \min_{\gama} \norm{\gama}_1~ \textrm{s.t.}~ \norm{\D \gama - \y}_2 \leq \epsilon .
\end{equation}

In the noiseless case, where $\epsilon=0$, one of the fundamental questions is whether and when one can be sure that the result of these approximation algorithms is in fact the unique sparsest representation of the signal. Equivalently, from a \emph{transformation} perspective, these questions explore the conditions under which the sparse synthesis operation is invertible.
A key property for the study of uniqueness is the \emph{spark} of the dictionary,  $\sigma(\D)$, defined as the smallest number of columns from $\D$ that are linearly-dependent.
If there exists a representation $\gama$ for a signal $\x$ such that $\norm{\gama}_0 < \frac{\sigma(\D)}{2}$,
then this solution is necessarily the sparsest possible \cite{donoho2003optimally}; { in other words, such a condition is sufficient for the representation to be unique.}
However, the spark is at least as difficult to evaluate as solving $\left( P_0 \right)$, and thus it is common to lower bound it with the mutual-coherence, $\mu(\D)$. This value is simply the maximal correlation between atoms in the dictionary:
\begin{equation}
\mu(\D) = \max_{i\ne j} \frac{\abs{\d_i^T \d_j}}{\norm{\d_i}_2 \norm{\d_j}_2 }
,
\end{equation}
where we have denoted by $\d_j$ the $j^{th}$ column of the matrix $\D$.
One may then bound the spark with the mutual coherence \cite{donoho2003optimally}, as $\sigma(\D) \geq 1 + \frac{1}{\mu(\D)}$. Then, a sufficient condition for uniqueness is to require that
\begin{equation}
\norm{\gama}_0 < \frac{1}{2} \left( 1 + \frac{1}{\mu(\D)} \right).
\end{equation}

\subsection{The Analysis Model}

The above model has an analysis counterpart, in which the assumption is that one can linearly transform the signal into a sparse representation \cite{nam2011cosparse,nam2013cosparse}.
Formally, for a fixed analysis operator $\O \in \mathbb{R}^{m\times n}$, a signal $\x \in \mathbb{R}^n$ belongs to the analysis model with co-sparsity $\ell$ if $\norm{\O\x}_0 = m - \ell$.
When $\O$ is a squared unitary matrix, the analysis model is identical to the synthesis one with dictionary $\D = \O^T$. However, in the overcomplete case where $m > n$, there is no simple connection between the two as they lead to generally different constructions.
Note that, unlike the synthesis counterpart, the pursuit in the analysis model is trivial in the noiseless case as it simply amounts to a multiplication by the analysis dictionary $\O$.
In the noisy case, the process of recovering $\x$ from the corrupted measurements $\y = \x + \e$ is done by solving the following minimization problem \cite{elad2007analysis}:
\begin{equation} \label{P_0_ell}
\left( P_0^\ell \right): ~ \min_{\x} \norm{\O \x}_0~ \textrm{s.t.} ~ \norm{\x - \y}_2 \leq \epsilon.
\end{equation}
Just as the $\left( P_0 \right)$ problem, this objective is NP-hard in general, and so one must resort to greedy approaches \cite{giryes2014greedy,nam2013cosparse} or $\ell_1$ relaxation alternatives \cite{elad2007analysis,candes2011compressed,nam2013cosparse}.

Lastly, a third type of sparse model is that of Sparsifying Transforms \cite{ravishankar2013learning,ravishankar2015online}. This analysis-type model seeks for a (typically square) dictionary $\W$ -- the transform -- that approximately sparsifies a signal $\y$, so that $\W\y = \gama + \e$, where $\norm{\gama}_0 \ll n$ and $\e$ is some nuisance (dense) vector. The optimization problems related to this model present interesting advantages, as the pursuit is no longer an NP-hard problem but rather a simple thresholding operation. We will not duel on Transform Learning any further in this paper, but we believe that many of the ideas raised in our work could be adapted to this model form as well.

\section{The Multi-Layer Sparse Coding Model}
\label{Sec:BackgroundMLSC}
While the above sparse models have been around for nearly two decades, a multi-layer extension was only recently introduced. This was done in a convolutional setting, thus termed Multi-Layer Convolutional Sparse Coding (ML-CSC) \cite{papyan2016convolutional,sulam2017multi}. This model is an extension of the convolutional sparse model \cite{bristow2013fast,papyan2017working}, which addresses the modeling of high dimensional signals through local shift-invariant sparse decompositions.
It is our intention in this work to consider a more general case and not to restrict ourselves to the convolutional scenario. We refer the interested reader to \cite{papyan2017working,papyan2017slicedbased} for a thorough review of convolutional sparse representations, their associated results and algorithms.

\subsection{Model and Pursuit Definitions}
The synthesis sparse model assumes that a signal $\x \in \mathbb{R}^n$ can be decomposed into a multiplication of a dictionary $\D_1 \in \mathbb{R}^{n\times m_1}$ and a sparse vector $\gama_1 \in \mathbb{R}^{m_1}$.
In the multi-layer model we extend this by assuming that $\gama_1$, and in fact every sparse representation, $\gama_i$, can also be decomposed as $\gama_i =\D_{i+1}\gama_{i+1}$, where $\D_{i+1} \in \mathbb{R}^{m_i\times m_{i+1}}$ is the dictionary of layer $i+1$ and $\gama_{i+1} \in \mathbb{R}^{m_{i+1}}$ is the corresponding sparse representation. We name this the Multi-Layer Sparse Coding (ML-SC) model, and formalize its definition as follows.

\begin{definition} (ML-SC signal):
Given a set of dictionaries ${\{\D_i}\}_{i=1}^k$, of appropriate dimensions, a signal $\x\in \mathbb{R}^n$ admits
a representation in terms of the ML-SC model if
\begin{equation}\begin{array}{rclc} 
\x  &= & \D_1\gama_1, & \norm{\gama_1}_0 \leq s_1 , \\ 
\gama_1 & = & \D_2\gama_2,  & \norm{\gama_2}_0 \leq s_2 ,\\
  & \vdots   \\
\gama_{k-1} & = & \D_k\gama_k,  & \norm{\gama_k}_0 \leq s_k .\\
\end{array}\end{equation}
\end{definition}
For the purpose of the following derivations, define $\D_{(i,j)}$ to be the effective dictionary from the $i^{th}$ to the $j^{th}$ layer, i.e., $\D_{(i,j)} = \D_i \D_{i+1} \cdots \D_j$. This way, one can concisely write $\gama_i = \D_{(i,j)} \gama_j$. 
For effective dictionaries from the first layer to the $j^{th}$ level, we simplify the notation and denote $\D_{(j)} = \D_{(1,j)}$, so that $\x = \D_{(i)} \gama_i$.
The ML-SC can then be interpreted as a global synthesis model, $\x = \D_{(k)}\gama_k$, with additional intermediate layer constraints.
As we will see, these two observations are the laying foundation for the current pursuit algorithms proposed by recent works (\cite{papyan2016convolutional,sulam2017multi}).
We now formalize the pursuit for the ML-SC model, referred to as Deep Pursuit:
\begin{definition} (Deep Pursuit):
For a signal $\y=\x+\e$, where $\x$ is an ML-CS signal and $\e$ is an additive noise, assume that the set of dictionaries ${\{\D_i}\}_{i=1}^k$, the cardinality vector $\s$, and the noise energy $\epsilon$, are all known. Define the Deep Pursuit $(DP_{\s})$ problem as:
\begin{equation}
\begin{split}
\left( DP_{\s} \right): ~ \textrm{find} ~ {\{\gama_i\}}_{i=1}^k ~~ \textrm{s.t.} ~ & \norm{\y- \D_1\gama_1}_2 \leq \epsilon \\
				& \gama_{i-1} = \D_i \gama_i, ~\forall ~ 2\leq i \leq k \\
                & \norm{\gama_i}_0 \leq s_i , ~\forall ~ 1\leq i \leq k.
\end{split}
\end{equation}
where the scalar $s_i$ is the $i^{th}$ entry of $\s$.
\end{definition}

We are interested in the following questions: Is the solution of $\left( DP_{\s} \right)$ unique in a noiseless ($\epsilon = 0$) setting? Is the solution stable to noise contamination? And under which conditions would these be true? 

\subsection{Uniqueness}
Consider a set of dictionaries $\{\D_i\}_{i=1}^k$ and a signal $\x$  admitting a multi-layer sparse representation defined by the set $\{\gama_i \}_{i=1}^k$. The claim of uniqueness answers the question of whether another set of sparse vectors can represent the same signal $\x$.
We present here the uniqueness theorem from \cite{papyan2016convolutional}, with necessary changes that make it suitable to the general (i.e., non-convolutional) multi-layer sparse model. 

\begin{theorem} 
(Uniqueness via the mutual coherence): Consider a noiseless ML-SC signal $\x$, its set of dictionaries $\left\{ \D_i \right\}_{i=1}^k$ and their corresponding mutual coherence constants $ \mu(\D_i),~\forall 1\le i \le k $. If
\begin{equation} \label{uniqueness_theorem_condition}
\forall ~ 1 \leq i \leq k, ~~ \norm{\gama_i}_0 = s_i < \frac{1}{2} \left( 1 + \frac{1}{\mu(\D_{i})} \right),
\end{equation}
then the set $\left\{ \gama_i \right\}_{i=1}^k$ is the unique solution to the $DP_\s$ problem.
\end{theorem}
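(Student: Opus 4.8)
The plan is to reduce the multi-layer uniqueness claim to $k$ applications of the single-layer, spark-based uniqueness result reviewed in \Cref{Sec:Background}, chained together by an induction over the layers. First I would convert the mutual-coherence hypothesis into a spark condition: since $\sigma(\D_i) \geq 1 + \frac{1}{\mu(\D_i)}$, the assumption $s_i < \frac{1}{2}\left( 1 + \frac{1}{\mu(\D_i)} \right)$ immediately yields $2s_i < \sigma(\D_i)$ for every $1 \leq i \leq k$. This strict inequality on the spark is the form I will actually exploit.

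Next, suppose $\{\hat{\gama}_i\}_{i=1}^k$ is any feasible point of the $\left( DP_\s \right)$ problem in the noiseless ($\epsilon = 0$) regime, so that $\x = \D_1\hat{\gama}_1$, $\hat{\gama}_{i-1} = \D_i\hat{\gama}_i$ for $2 \leq i \leq k$, and $\norm{\hat{\gama}_i}_0 \leq s_i$ for all $i$. The goal is to show $\hat{\gama}_i = \gama_i$ for every $i$, which I would establish by induction, treating each layer as a standalone synthesis pursuit. For the base case, both representations satisfy $\D_1\left( \gama_1 - \hat{\gama}_1 \right) = \0$; were they distinct, the nonzero vector $\gama_1 - \hat{\gama}_1$ would lie in the null space of $\D_1$ while carrying at most $\norm{\gama_1}_0 + \norm{\hat{\gama}_1}_0 \leq 2s_1 < \sigma(\D_1)$ nonzeros, contradicting the definition of $\sigma(\D_1)$ as the smallest number of linearly dependent columns of $\D_1$. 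Hence $\gama_1 = \hat{\gama}_1$.

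For the inductive step, assume $\gama_{i-1} = \hat{\gama}_{i-1}$ has already been shown. The model constraints then give $\gama_{i-1} = \D_i\gama_i = \D_i\hat{\gama}_i$, so $\gama_i - \hat{\gama}_i$ sits in the null space of $\D_i$ with at most $2s_i < \sigma(\D_i)$ nonzeros, and the identical spark argument forces $\gama_i = \hat{\gama}_i$. Iterating through $i = 1, \ldots, k$ shows that $\{\gama_i\}_{i=1}^k$ is the unique solution of $\left( DP_\s \right)$.

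I do not anticipate a genuine obstacle here: the essential observation is that once the previous layer is pinned down, each layer becomes an ordinary single-layer synthesis problem, so the multi-layer structure decouples and the proof collapses to iterating a known lemma. The only points demanding care are bookkeeping ones, namely bounding the cardinality of the \emph{difference} $\gama_i - \hat{\gama}_i$ by $2s_i$ (which uses that the competing solution obeys $\norm{\hat{\gama}_i}_0 \leq s_i$ even though it need not attain exactly $s_i$ nonzeros) and confirming that the strict inequality $2s_i < \sigma(\D_i)$ survives the coherence-to-spark conversion at each layer.
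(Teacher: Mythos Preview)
Your proposal is correct and follows precisely the approach the paper describes: the \emph{modus operandi} stated right after the theorem is to propagate the single-layer uniqueness guarantee through the layers, first showing $\gama_1$ is unique, then using that to show $\gama_2$ is unique, and so on. Your induction with the spark bound $2s_i < \sigma(\D_i)$ implements this chain exactly.
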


The simple \emph{modus operandi} behind this result is to propagate the uniqueness guarantees progressively through the layers. In other words, it first demands the first layer to be the unique representation of the signal, then it demands the second layer to be the unique representation of the first layer, and so on.
In \cite{sulam2017multi}, the authors suggested an improvement based on a projection approach. Instead of propagating the uniqueness conditions layer by layer, one can project the signal directly to the deepest representation layer, and to demand uniqueness using the effective model, $\D_{(k)}$, requiring:
\begin{equation} \label{uniqueness_theorem_condition_effective_dictionary}
\norm{\gama_k}_0 < \frac{1}{2}\left( 1 + \frac{1}{\mu(\D_{(k)})} \right). 
\end{equation}

An immediate way of improving over these uniqueness conditions is to maximize between \Cref{uniqueness_theorem_condition} and \Cref{uniqueness_theorem_condition_effective_dictionary}. One can leverage this idea in order to maximize over all the combination of the mid-effective dictionaries, $\D_{(i,j)}$, and if there is any partition of $\{1,\ldots,k\}$ such that $\gama_k$ is guaranteed to be unique then all the representation set $\{\gama_i \}_{i=1}^k$ is thus also unique. However, as we will see in the next section, all these variants of uniqueness guarantees are in fact too restrictive, and do not capture the true essence of the ML-SC model. In \Cref{Sec:Uniqueness} we will revisit this matter and provide a better study of this property, with far tighter bounds.

\subsection{Stability}
Real signals might contain noise or deviations from the idealistic model assumptions presented above. In these cases, one would like to know what the error in the estimated representation is, and how sensitive the different pursuit formulations are to different levels of noise. In other words, we would like to analyze the  stability of the solutions to the pursuit problems. The theorem below is an adaptation of a result from \cite{papyan2016convolutional}.
\begin{theorem}
(Stability of the $DP_\s$ problem): Suppose an ML-SC signal $\x$ is contaminated with energy-bounded noise $\e$, $\norm{\e}_2 \leq \epsilon$, resulting in $\y = \x+\e$, and suppose the set of solutions $\{ \hat{\gama}_i \}_{i=1}^k$ is obtained by solving the $DP_\s$ problem. If the true representation set satisfies the uniqueness conditions in \Cref{uniqueness_theorem_condition},
then
\begin{equation}
\left\Vert \gama_i - \hat{\gama}_i \right\Vert_2^2 \leq 
4 \epsilon^2  \prod_{j=1}^{i} 
{\frac{1}{1-\left(2s_j-1 \right) \mu (\D_{ j } )}}.
\end{equation}
\end{theorem}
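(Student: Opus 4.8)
The plan is to prove the bound by a layer-by-layer propagation of error estimates, built on a single-layer stability lemma of the classical Donoho--Elad--Temlyakov flavor. The workhorse is the following elementary fact: if $\D$ has $\ell_2$-normalized columns and $\v$ is any vector with $\norm{\v}_0 \le 2s$, then the Gram submatrix restricted to the support of $\v$ has, by Gershgorin's theorem, smallest eigenvalue at least $1 - (2s-1)\mu(\D)$, so that
\begin{equation}
\norm{\D\v}_2^2 \ge \bigl(1 - (2s-1)\mu(\D)\bigr)\,\norm{\v}_2^2.
\end{equation}
The uniqueness condition \Cref{uniqueness_theorem_condition} is precisely what makes each factor $1-(2s_j-1)\mu(\D_j)$ strictly positive, so that the claimed product is finite and the bound is meaningful.

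First I would treat the outermost layer. Both $\gama_1$ and $\hat{\gama}_1$ are $s_1$-sparse, so their difference is at most $2s_1$-sparse. Using the constraint $\norm{\y - \D_1\hat{\gama}_1}_2 \le \epsilon$ together with $\x = \D_1\gama_1$ and $\norm{\e}_2 \le \epsilon$, the triangle inequality gives $\norm{\D_1(\gama_1 - \hat{\gama}_1)}_2 \le 2\epsilon$. Combining this with the lower bound above yields
\begin{equation}
\norm{\gama_1 - \hat{\gama}_1}_2^2 \le \frac{4\epsilon^2}{1 - (2s_1-1)\mu(\D_1)},
\end{equation}
which is the stated bound for $i=1$ and serves as the base case of an induction.

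Next I would propagate through the intermediate layers. The key observation is that the $DP_\s$ constraints force \emph{both} the ground-truth and the estimated intermediate representations to satisfy the linear relations \emph{exactly}: $\gama_{i-1} = \D_i\gama_i$ and $\hat{\gama}_{i-1} = \D_i\hat{\gama}_i$. Consequently $\norm{\D_i(\gama_i - \hat{\gama}_i)}_2 = \norm{\gama_{i-1} - \hat{\gama}_{i-1}}_2$, so the error already accrued at layer $i-1$ plays the role of the ``observation noise'' for layer $i$. Since $\gama_i - \hat{\gama}_i$ is $2s_i$-sparse, the same eigenvalue lower bound applied to $\D_i$ gives
\begin{equation}
\norm{\gama_i - \hat{\gama}_i}_2^2 \le \frac{\norm{\gama_{i-1} - \hat{\gama}_{i-1}}_2^2}{1 - (2s_i-1)\mu(\D_i)}.
\end{equation}
Unrolling this recursion from the base case introduces one factor $\bigl(1-(2s_j-1)\mu(\D_j)\bigr)^{-1}$ per layer and carries the single $4\epsilon^2$ from the first step, reproducing exactly the stated product.

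I expect the only genuinely delicate point to be the sparse lower bound on $\norm{\D_i\v}_2$: it relies on the columns being normalized and on controlling the extreme eigenvalues of the relevant Gram submatrix by the mutual coherence, and one must verify that $\mathrm{supp}(\gama_i - \hat{\gama}_i)$ never exceeds cardinality $2s_i$, so that $(2s_i-1)\mu(\D_i)$ is the correct coefficient. Everything else is a routine triangle-inequality-plus-induction argument; note in particular that the factor of $2$ (hence the $4\epsilon^2$) is incurred only once, at the outermost layer, and does \emph{not} compound across layers precisely because the intermediate constraints in $DP_\s$ are exact rather than noisy.
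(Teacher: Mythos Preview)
Your proposal is correct and is exactly the standard layer-by-layer propagation argument; note that the present paper does not supply its own proof of this theorem but states it as an adaptation of the result in \cite{papyan2016convolutional}, whose argument proceeds precisely along the lines you outline (Gershgorin lower bound on the restricted Gram matrix, triangle inequality at the outer layer, and exact intermediate constraints so the factor $4$ appears only once).
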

\noindent
Clearly, one could use the same improvements suggested above regarding the uniqueness analysis in order to further strengthen this result. 

Note that the above result refers to the solution of the $DP_\s$ problem -- though without specifying how such solutions could be estimated in practice. We now turn to address this aspect, and present the existing pursuit algorithms as introduced in \cite{papyan2016convolutional} and later in \cite{sulam2017multi}, which aim to solve the $DP_\s$ problem.

\subsection{The Layer by Layer Pursuit}
The first method proposed for solving the $DP_\s$ problem was presented in \cite{papyan2016convolutional}, where the idea is to approximately solve each layer progressively, propagating the solution from the first layer to the deeper ones.
In \Cref{Layered_Thresholding} we present the Layered Pursuit algorithm, introduced and theoretically analyzed in \cite{papyan2016convolutional}, and which was shown to be connected to the forward pass of neural networks.
Note that two such variants were suggested in \cite{papyan2016convolutional}: one relying on the Thresholding algorithm, and the other on the Basis Pursuit alternative. \Cref{Layered_Thresholding} presents these two options together, where we denote by $\mathcal{H}(\cdot)$ a thresholding operator, and $P_1(\D,\y,\lambda) \triangleq
\underset{\gama}{\argmin} \norm{\D \gama  - \y }_2^2 \st \|\gama\|_1 \leq \lambda$.

\begin{minipage}{.45\textwidth}
\begin{algorithm} [H]
\caption{The Layered Pursuit algorithm} \label{Layered_Thresholding}
\textbf{Input} \\
\hspace*{\algorithmicindent} $\y$ - a signal.\\
\hspace*{\algorithmicindent} $\left\{ \D_i \right\}_{i=1}^k$ - a set of dictionaries.\\
\textbf{Output} \\
\hspace*{\algorithmicindent} $\{ \hat{\gama}_i \}_{i=1}^k$ - a set of representations.\\
\textbf{Process}
\begin{algorithmic}[1]
\State $\hat{\gama}_0 \gets \y$
\For {$i = 1:k$} \footnotesize
\State $\hat{\gama}_i = \left\{ \begin{array}{cc}
\mathcal{H}\left(\D_i^T \hat{\gama}_{i-1}\right) & \textrm{Thrs.}  \\
P_1(\D_i,\hat{\gama}_{i-1},\lambda_i) & \textrm{BP}
\end{array} \right.$
\EndFor
\State \Return $\{ \hat{\gama}_i \}_{i=1}^k$
\end{algorithmic}
\end{algorithm}
\vspace{.2cm}
\end{minipage} \hfil
\begin{minipage}{.45\textwidth}
\begin{algorithm} [H]
\caption{The Basic Projection Pursuit algorithm} \label{Projection_Thresholding}
\textbf{Input} \\
\hspace*{\algorithmicindent} $\y$ - a signal.\\
\hspace*{\algorithmicindent} $\left\{ \D_i \right\}_{i=1}^k$ - a set of dictionaries.\\
\textbf{Output} \\
\hspace*{\algorithmicindent} $\{ \hat{\gama}_i \}_{i=1}^k$ - a set of representations.\\
\textbf{Process}
\begin{algorithmic}[1] \footnotesize
\State $\hat{\gama}_k = \left\{ \begin{array}{cc}
\mathcal{H}\left(\D_{(k)}^T \y\right) & \textrm{Thres.} \\
P_1(\D_{(k)},\y,\lambda_k) & \textrm{BP}
\end{array} \right.$
\For {$i = k-1: -1: 1$}
\State $\hat{\gama}_i \gets \D_{i+1} \gama_{i+1}$
\EndFor
\State \Return $\{ \hat{\gama}_i \}_{i=1}^k$
\end{algorithmic}
\end{algorithm}
\vspace{.2cm}
\end{minipage}

It is important to note that, while providing approximations, this algorithm does not recover a valid ML-SC signal as it only guarantees that $\gama_{i-1} \approx \D_i \gama_i$.
Another drawback is the recovery error which grows as a function of the network's depth, contradicting the intuition that additional information should decrease the error. 

\subsection{The Projection Pursuit}
An alternative to the layered pursuit is the projection approach presented in \cite{sulam2017multi}. In this algorithm, one first finds the deepest representation using the effective dictionary $\D_{(k)}$, and then propagates this solution all the way back to the first layer.
In \Cref{Projection_Thresholding} we present the simplified version of the Projection Pursuit algorithm, noting that in \cite{sulam2017multi} the authors suggested an improvement that iteratively backtracks if the propagated mid-layer representations violate the model constraints, and attempts to find an alternative sparser and feasible representation, $\gama_k$, in a greedy manner.

This algorithm, if successful, provides an estimation which, unlike the previous case, satisfies the ML-SC constraints. Similar to the behavior for the uniqueness guarantees appear in \Cref{uniqueness_theorem_condition_effective_dictionary}, this approach provides a looser condition and, as presented in \cite{sulam2017multi}, the recovery error is reduced.
However, this algorithm is essentially a single-layer effective model and therefore it does not explicitly use all the available information.

\section{The Synthesis-Analysis Interpretation}
\label{Sec:Synthesis_Interpretation_Issues}

So far, the multi-layer model was interpreted as an extension of the general synthesis model. We present here several concerns that follow from this understanding:
\begin{enumerate}[wide, labelwidth=!]
\item Sparse dictionaries: The approaches presented above had no choice but to enforce sparsity on the intermediate dictionaries in order to ensure sparse intermediate representations. In the more general case, where the intermediate dictionaries are dense, the two presented algorithms simply fail: The layer-wise approach would cause a very high error since dense dictionaries do not span well sparse signals. As a result, every mid-layer pursuit, except from the first one, would result in a very low SNR which further decreases as we go deeper into the model layers. The projection alternative, on the other hand, would converge to the zero representation, because even if it would attain a reasonable deepest layer estimation, the corresponding intermediate representations would become fully dense due to the estimation noise. Following the backtracking in the proposed algorithm, 
the deepest layer cardinality would have to reduce in an attempt to decrease the intermediate cardinalities, eventually resulting in zero representations.
\item Spanned space:
Under the current interpretation, it is unclear what is the space spanned by the ML-CS model. This is related to the following questions:
\begin{itemize}
  \item Empty model -- Are there signals, and their corresponding representations, that satisfy the model constraints, or is the model empty?
  \item Model sampling -- If the model is not empty, how can we synthesize signals satisfying the model constraints? What are the restrictions on the model parameters?
  \end{itemize}
\item Recovery error: The ML-SC signal belongs to a model that is far more constrained than the single-layer version, as additional conditions are introduced in the form of the sparsity of the intermediate representations. Correspondingly, it is expected that the recovery error will be significantly better given these increased constraints. As we have shown above, however, the error in the layer-wise method increases across the layers, and the projection method provides error bounds that are basically \emph{single-layer} type estimates. This indicates that the current approaches provide sub-optimal solutions in attempting to solve the multi-layer pursuit problem.
\end{enumerate}

\subsection{The Synthesis-Analysis Interpretation} \label{Subsection_Synthesis_Analysis_interpretation}
Motivated by these unsolved issues, we now propose to look at the ML-SC model as a one piece rather than as a collection of single-layer constructions. We interpret this model as a unique combination between the synthesis and the analysis paradigms. While the outer shell maintains a synthesis interpretation,
$\x = \D_{(k)} \gama_k, ~ \norm{\gama_k}_0 \leq s_k$,
the intermediate constraints can be understood as analysis constraints on the deepest representation: $\norm{\gama_i}_0 = \norm{\D_{(i+1,k)} \gama_k}_0 \leq s_i, ~~ \forall 1\leq i < k$.

Armed with this observation we can begin re-examining the ML-SC model. 
First, we would like to identify the true space spanned by the signals satisfying the model constraints.
Relying on the effective dictionary (synthesis) interpretation, we know that these signals lie in a union of subspaces composed of all the options of choosing $s_k$ columns from $\D_{(k)}$, each spanning a subspace of dimension $s_k$. 
However, invoking the intermediate analysis constraints must influence further this construction. For known co-supports $\Lambda_1^c, \Lambda_2^c, \ldots, \Lambda_{k-1}^c$, and support $\Lambda_k$, we define the following two matrices: 
\begin{equation} \label{phi_matrix}
\Phib \triangleq 
\renewcommand\arraystretch{2}
\begin{bmatrix}
\D_{(2,k)}^{\Lambda_1^c,\mathcal{I}} \\
\D_{(3,k)}^{\Lambda_2^c,\mathcal{I}} \\
\vdots \\
\D_{(k,k)}^{\Lambda_{k-1}^c,\mathcal{I}}
\end{bmatrix}
\textrm{ and } 
\Phib^{\Lambda_k} \triangleq 
\renewcommand\arraystretch{2}
\begin{bmatrix}
\D_{(2,k)}^{\Lambda_1^c, \Lambda_k} \\
\D_{(3,k)}^{\Lambda_2^c, \Lambda_k} \\
\vdots \\
\D_{(k,k)}^{\Lambda_{k-1}^c, \Lambda_k}
\end{bmatrix}
. \end{equation}
The rows in $\Phib$ define directions to which $\gama_k$ must be orthogonal, as they refer to the zeros in all the intermediate representations. Clearly, the definition of the above two matrices depends not only on the support of $\gama_k$, but also on the co-supports of the intermediate representations.

Considering the fact that $\gama_k$ is $s_k$ sparse and its support is $\Lambda_k$, the matrix $\Phib^{\Lambda_k}$ defines a null-space to which the non-zeros in $\gama_k$ belong. 
Thus, the degrees of freedom in choosing the deepest representation reduces from $s_k$ to $s_k-\rank\{\Phib^{\Lambda_k}\}$. In other words, these signals no longer live in a union of sub-spaces of dimension $s_k$, but rather in a union of $s_k-\rank\{\Phib^{\Lambda_k}\}$ dimensional sub-spaces. Denoting by $\ell_i$ the co-sparsity of $\gama_i$, the number of such sub-spaces hence grows from $\begin{pmatrix} m_k \\ s_k \end{pmatrix}$ to
$
\begin{pmatrix} m_k \\ s_k \end{pmatrix}
\prod_{i=1}^{k-1}
\begin{pmatrix} m_i \\ \ell_i  \end{pmatrix}
$, which is the number of the supports options. These elements in the ML-SC model are depicted in \Cref{fig:mlsc_model} for a two-layer model.

\begin{figure*} \centering
\includegraphics[width=.9\textwidth]{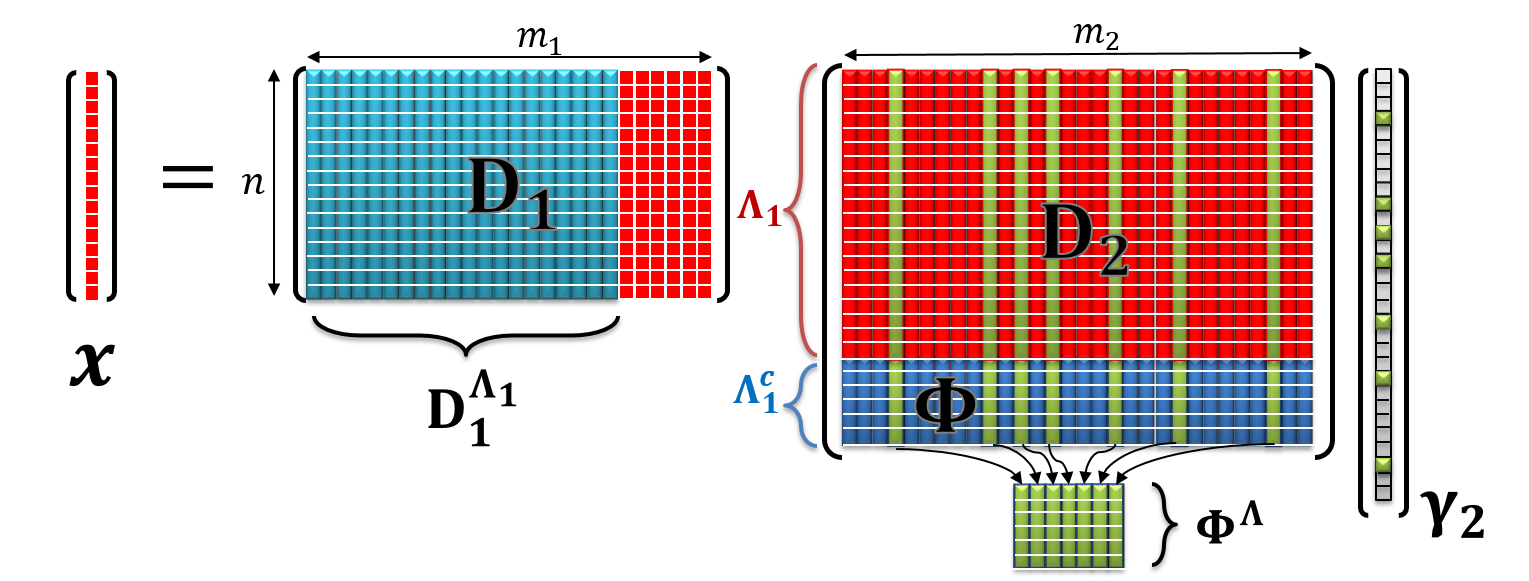}
\caption{Illustration of the ML-SC model for a two-layer decomposition. \vspace{-.6cm}
}
\label{fig:mlsc_model}
\end{figure*}

Several theoretical corollaries can be derived from this analysis, answering some of the questions and issues posed in the previous section:
\begin{enumerate}[wide, labelwidth=!]
\item Empty model: an immediate corollary is that as long as $ s_k > \rank\{\Phib^{\Lambda_k}\} $, the model is not empty, because using the rank-nullity theorem, we know that 
\begin{equation}
\dim\left\{\ker\left(\Phib^{\Lambda_k}\right)\right\} = 
s_k - \rank\{\Phib^{\Lambda_k}\}
.\end{equation}
Then, as long as $ s_k > \rank\{\Phib^{\Lambda_k}\} $, there exists a $\gama_k$ that satisfies the model constraints. 

\item Not \emph{so sparse} intermediate layers: a very interesting benefit is that the mid-layer representations must not need to be too sparse. In fact, contrary to previous works that limited the mid-layers cardinality, in the analysis-synthesis view one limits the intermediate co-cardinality, i.e., the number of zeros, such that $ s_k > \rank\{\Phib^{\Lambda_k}\} $. 
This property mimics the typical behavior of deep neural networks, in which the intermediate representations (or activations \cite{papyan2016convolutional}) are sparse but not extremely so.

\item Model sampling: one can now devise a systematic way to sample a signal from the model. The first step is to randomly choose the representations support, or equivalently, select one of the subspaces. Then, one can multiply the matrix $\K$, which spans the null space of $\Phib^{\Lambda_k}$, with a random vector $\alfa \in \mathbb{R}^{ s_k-\rank\{\Phib^{\Lambda_k}\}}$, resulting in the non-zero coefficient in $\gama_k = \K \alfa$. Finally, the multiplication of $\gama_k$ by the effective dictionary, $\D_{(k)}$, produces the desired ML-SC signal.

\item Sparse dictionaries: recall the need of previous works to consider intermediate sparse dictionaries. Such a construction can be understood only as a particular case of this model, where the obtained representations are indeed sparse (due to the sparse atoms) but not because non-trivial orthogonality was enforced between the deepest representation and the intermediate dictionaries. This way, the matrix $\Phib^{\Lambda_k}$ becomes the zero matrix almost surely, and therefore, the signal lies in a subspace of dimension $s_k$ and the intermediate constraints are passive. 
One might think that in this case there is no extra advantage in the multi-layer model over the single-layer one. However, the matrix $\Phib$ \emph{does} contain information on $\gama_k$ and therefore it is expected to improve the recovery of the support (as in the correcting-support version of the projection algorithm). This can be understood by enforcing the constraint that every new non-zero that is to be added to $\gama_k$ should be orthogonal to the matrix $\Phib$. Indeed, this will be exploited by the algorithm presented in Section \ref{Sec:HolisticPursuit} and its benefits will become a lot clearer then. As we see, the matrix $\Phib$ has two functions: aiding the detection of the true support, for which one should employ the whole matrix $\Phib$, and estimating the values in $\gama_{\Lambda_k}$, for which the sub matrix $\Phib^{\Lambda_k}$ is the one of interest.

\item Random dictionaries: in the other extreme, when the dictionaries are fully dense and sampled from a continuous distribution, the rank of $\Phib^{\Lambda_k}$ is equal to $\sum_{i=1}^{k-1} \ell_i$ with probability 1. Therefore, there are $s_k - \sum_{i=1}^{k-1} \ell_i$ degrees of freedom in choosing $\gama_k$, implying that the signal dimension is significantly reduced. 

\item Recovery error: projecting the signal on a smaller dimensional space reduces the recovery error, and therefore, the ML-SC model is expected to give a significant improvement over the single-layer model. We will extend on this matter later, but we anticipate that this error is proportional to the degrees of freedom, enabling a significant improvement in the ML-SC model.

\item A Holistic alternative: The new interpretation points to the fact that the various representations in all the layers should be estimated jointly as opposed to (relatively) independently or sequentially. This will motivate the derivation of our Holistic Pursuit, to be presented in \Cref{Sec:HolisticPursuit}.
\end{enumerate}

\section{Uniqueness revisited}
\label{Sec:Uniqueness}
The new analysis perspective motivates us to derive a new uniqueness theorem. The following result reflects the underlying benefits of the ML-SC model, exemplifying the gain one can obtain in return for more constrained assumptions. In the proof below we combine the spark -- a synthesis characterization \cite{donoho2003optimally,bruckstein2009sparse} -- with the union of subspaces interpretation \cite{nam2013cosparse,lu2008theory}. In addition, we will require the dictionaries to be in \emph{general position}, as in \cite{nam2013cosparse}, and we defer the precise definition of this characterization to the \Cref{app:UniquenessRevProof}.

\begin{theorem} 
Consider an ML-SC signal $\x$, and a set of dictionaries $\{\D_i\}_{i=1}^{k}$ in general position. 
If there exists a set of representations, $\{\gama_i \}_{i=1}^k$ satisfying
\begin{equation}
s_k \leq \frac{\sigma(\D_{(k)}) -1 }{2} + r,
\end{equation}
where $r = \rank\{ \Phib^{\Lambda_k} \}$ and $s_k$ is number of non-zero coefficients in the deepest layer, then this set is the \emph{unique} ML-SC representation for $\x$ such that its deepest layer has no more than $s_k$ non-zeros and the rank of the corresponding $\Phib^{\Lambda_k}$ is no greater than $r$.
\end{theorem}

Before moving forward, a comment is in place. The traditional uniqueness claims in \cite{papyan2016convolutional,sulam2017multi} provide uniqueness guarantees only when $s_k \leq \frac{\sigma(\D_{(k)}) -1 }{2}$,
since such results use only the synthesis-type interpretation.
Now, the above result efficiently leverages the analysis prior imposed on $\gama_k$, resulting in less restrictive conditions for uniqueness. For the sake of brevity, we defer the proof to \Cref{app:UniquenessRevProof}. 

Finally, note that the above guarantees exclude dictionaries that are not in general position. In this way, some relevant and practical dictionaries, such as certain wavelet frames and total variation, are not covered by the above theorem. We believe that this result can in fact be extended to consider linear dependences and dictionaries not in general position, and this might be studied in detail in future work.

\section{The Oracle Estimator}
\label{Sec:Oracle}
The above result begins to show the benefit of considering all representations simultaneously. In this section, we aim to quantify this precisely by analyzing the performance of the oracle estimator: suppose one knows the true supports across all layers, what is then the optimal (oracle) estimator for the all representations?
The answer to this question is of great importance since the Oracle estimator is the cornerstone in every pursuit, and it provides an idealistic understanding of the capabilities of a given algorithm. In order to provide a complete picture, we first analyze the Oracle estimators for the layer-wise and projection approaches, and then proceed to analyze the holistic alternative proposed in this work. We note that the previous work \cite{papyan2016convolutional,sulam2017multi} on multi-layer sparse models have only considered bounded noise assumptions, adopting a worst-case point of view. Not only does this lead to very loose bounds, but it also blurs the real connection between the model features and the error these induce. Thus, we present a novel analysis of the Oracle estimator performance for all approaches under stochastic noise assumptions.

Let us start by recalling the average performance bounds for the general single-layer sparse model. Consider a signal $\y=\x+\e$, where $\x = \D \gama$, $\e\sim \mathcal{N}(\0,\sigma^2 \I)$ is a Gaussian noise, the representation true support is $\Lambda$ with cardinality $s$, and $\delta_s^{\D}$ is the RIP constant of the dictionary $\D$ \cite{candes2008restricted}. Then, the Oracle estimator is obtained via simple Least-Squares,
$
\hat{\gama}^{\Lambda} = {\D^{\Lambda}}^{\dagger} \y.
$
The above estimate can be equivalently expressed as
$
\hat{\gama}^{\Lambda} = \gama^{\Lambda} + \tilde{\e} = \gama^{\Lambda} + \sigma \left( 
	{\D^{\Lambda}}^T {\D^{\Lambda}}
    \right)^{-1/2} \z
$,
where $\tilde{\e} \sim \mathcal{N}(\0,\sigma^2 
	( 
	{\D^{\Lambda}}^T {\D^{\Lambda}}
    )^{-1})$, and $\z \sim \mathcal{N} (0,\I)$. 
Therefore, in expectation, one has that
\begin{equation}
\E \norm{\gama-\hat{\gama}}_2^2 = \sigma^2 \Trace 
	\left( 
	{\D^{\Lambda}}^T {\D^{\Lambda}}
    \right)^{-1},
\end{equation}
and the bounds on the recovery error can be shown to be (see \cite{candes2007dantzig,ben2010coherence}):
\begin{equation} \label{Oracle_bounds_one_layer}
\frac{\sigma^2 s}{1+\delta_s^{\D}} \leq \E \norm{\gama-\hat{\gama}}_2^2 \leq  \frac{\sigma^2 s}{1-\delta_s^{\D}}.
\end{equation}
As we see, the recovery error of the Oracle estimator is proportional to the representation cardinality. 

With these tools we now analyze the Oracle estimator performance for the different approaches for the multi-layer model. Consider a signal $\y = \x+\e$, but now $\x=\D_1 \gama_1 = \ldots = \D_{(k)}\gama_k $ is an ML-SC signal, the true support of layer $i$ is $\Lambda_i$ with cardinality $s_i$, and we denote by $\delta_{s_i}^{\D_i}$ the RIP constant of the dictionary $\D_i$ for cardinality $s_i$. 

In addition, we will need to define an appropriate extension of the RIP, which we named by Subset RIP, for those cases where not only the representation is sparse, but the signal is also sparse. Recall that if a matrix satisfies the RIP, the constant $\delta_s^{\D}$ provides a bound to the deviation of the singular-values of every sub-dictionary of $s$ columns from 1. The proposed extension provides a similar interpretation but for sub-dictionaries obtained by removing not just columns, corresponding to the support of a certain $\gama_i$, but also rows, corresponding to the support of $\gama_{i-1}$.

For normalized Gaussian random matrices, the singular values of a sub-dictionary are indeed expected to be centered at 1. If now certain $s$ out of $n$ rows are removed, one would expect the singular-values of those sub-dictionaries to be centered around $\frac{s}{n}$. Note that most matrices that satisfy the RIP (like sub-Gaussian matrices) would also satisfy the Subset RIP definition. We define the Subset RIP formally as follows.
\begin{definition} \label{rip_extension}
For any subset of $s_R$ rows, $\Lambda_R$, and any subset of $s_C$ columns, $\Lambda_C$, the matrix $\D \in \mathbb{R}^{n \times m}$ satisfies the Subset RIP with constant $\delta_{s_R,s_C}^{\D}$ if this is the minimum constant so that
\begin{equation} \label{rip_extension_definition}
\left( \frac{s_R}{n} - \delta_{s_R,s_C}^{\D} \right) \norm{\e }_2^2 \leq
\norm{\D^{\Lambda_R,\Lambda_C} \e }_2^2 
\leq
\left( \frac{s_R}{n} + \delta_{s_R,s_C}^{\D} \right) \norm{\e }_2^2
\end{equation}
holds for all vectors $\e$.
\end{definition}

The Subset RIP will become useful in the Oracle estimator analysis that we are about to present.

\subsection{Oracle Performance for Layer-Wise Pursuit}
In the layer-wise approach, we start the recovery process by estimating $\gama_1$, and obtaining $\hat{\gama}_1$. Then, we use $\hat{\gama}_1^{\Lambda_1}$ to estimate $\gama_2$, and so on to the deepest layer. In an oracle setting, the estimation of each layer is performed using the corresponding oracle estimators for each layer. One should wonder if the oracle estimation of $\gama_2$ should be carried out using the sub dictionary $\D_2^{\Lambda_2}$, or the more restricted version $\D_2^{\Lambda_1, \Lambda_2}$. As we will show towards the end of this section, the former is preferred, as the latter leads to a biased error. 

In this manner, we employ the zero extension of the previous layer, $\hat{\gama}_{i-1}$ in order to estimate $\gama_i$, which results in:
\begin{equation} \label{eq:layer_wise_oracle_estimator}
\begin{split}
\hat{\gama}_i^{\Lambda_i} &= {\D_i^{\Lambda_i}}^{\dagger} \hat{\gama}_{i-1}
= 
\left( {\D_i^{\Lambda_i}}^T {\D_i^{\Lambda_i}} \right)^{-1}
{\D_i^{\Lambda_{i-1},\Lambda_i}}^T \hat{\gama}_{i-1}^{\Lambda_{i-1}}
\\
& =
\left( {\D_i^{\Lambda_i}}^T {\D_i^{\Lambda_i}} \right)^{-1}
{\D_i^{\Lambda_{i-1},\Lambda_i}}^T 
\cdots
\left( {\D_1^{\Lambda_1}}^T {\D_1^{\Lambda_1}} \right)^{-1}
{\D_1^{\Lambda_1}}^T 
\y
 = \U_{(i,1)} \y.
\end{split}
\end{equation}
In this form, one can concisely express, for every $i^{th}$ layer:
$
\hat{\gama}_i^{\Lambda_i} = \gama_i^{\Lambda_i} + \U_{(i,1)}  \e
= \gama_i + \sigma \W_i \z 
$
where $\W_i = ( \U_{(i,1)} \U_{(i,1)}^T )^{1/2}$, and as before $\z \sim \mathcal{N} (0,\I)$. As we prove in \Cref{appendix_layer_wise}, the recovery error bounds are:
\begin{equation} \label{layer_wise_bounds}
\sigma^2 s_i
    \frac{1}{1+\delta_{s_1}^{\D_1}} 
    \prod_{j=2}^{i}
    \frac{\frac{s_{j-1}}{n_{j-1}} - \delta_{s_{j-1},s_j}^{\D_j} }
    {\left(1+\delta_{s_j}^{\D_j}\right)^2}
\leq
\E \norm{\gama_i-\hat{\gama}_i}_2^2  \\
\leq 
\sigma^2 s_i
    \frac{1}{1-\delta_{s_1}^{\D_1}} 
    \prod_{j=2}^{i}
    \frac{\frac{s_{j-1}}{n_{j-1}} + \delta_{s_{j-1},s_j}^{\D_j} }
    {\left(1-\delta_{s_j}^{\D_j}\right)^2}.
\end{equation}
These bounds show that, for a given layer, its oracle estimator error depends linearly with its sparsity level, i.e.: it is proportional to $\sigma^2 s_i$, just like the single-layer case.
It is worthwhile noting that the above constants at each layer depend on the particular setting of the model parameters, such as the ratio on non-zero elements in other layers. In the non-oracle case (where the supports are unknown) as shown in \cite{papyan2016convolutional,sulam2017multi}, these bounds become looser with the depth of the network.

\subsection{Oracle Performance for the Projection Pursuit}
In the projection approach, we start the recovery by using the effective model to estimate the deepest sparse representation, $\gama_k$:
$\hat{\gama}_k = \D_{(k)}^{\dagger} \y,$ for which we can provide the single-layer error bounds from \Cref{Oracle_bounds_one_layer}:
\begin{equation}
\frac{\sigma^2 {s_k}}{1+\delta_{s_k}^{\D_{(k)}}}
\leq
\E \norm{\gama_k-\hat{\gama}_k}_2^2
\leq 
\frac{\sigma^2 s_k}{1-\delta_{s_k}^{\D_{(k)}}}.
\end{equation}
In the next steps, we use the known mid-layers supports to back track $\hat{\gama}_k$ to shallower representations:
\begin{equation} \label{mid_layers_projection_connection}
\hat{\gama}_i = \D_{i+1}^{\Lambda_i, \Lambda_{i+1}} \hat{\gama}_{i+1} =
\D_{i+1}^{\Lambda_i, \Lambda_{i+1}}
\cdots
\D_{k}^{\Lambda_{k-1}, \Lambda_{k}} \hat{\gama}_k.
\end{equation}
As we prove in \Cref{appendix_projection}, this process results with the following mid-layers error bounds:
\begin{equation} \label{projection_bounds}
\sigma^2
s_k
\frac{c_{k_1}}{1+\delta_{s_k}^{\D_{(k)}}}
\leq
\E \norm{\gama_i-\hat{\gama}_i}_2^2 \\
\leq 
\sigma^2
s_k
\frac{c_{k_2}}{1-\delta_{s_k}^{\D_{(k)}}},
\end{equation}
where $c_{k_1} = \prod_{j=i+1}^{k} 
\left(
\frac{s_{j-1}}{m_{j-1}} - \delta_{s_{j-1},s_j}^{\D_{j}}
\right)$, and $c_{k_2} = \prod_{j=i+1}^{k} 
\left(
\frac{s_{j-1}}{m_{j-1}} + \delta_{s_{j-1},s_j}^{\D_{j}}
\right)$.

Interestingly, we can see that the recovery error of the intermediate layers no longer depends on their cardinality but rather on that of the deepest layer -- which typically results in a lower error. 
Just as in the layer-wise case, the particular constants depend on the model parameters. However, in this case, these values are given by the multiplication of terms from the deeper to shallower layers, as opposed to from shallower to deeper. In the non-oracle case, this fact causes the error to grow accordingly \cite{sulam2017multi}.

One might think that the deepest layer estimator in the projection method must be optimal, as it results from a global Least-Squares. We will show that this is in fact not the case, because the Least-Squares estimator is only optimal when no additional information can be exploited. 
As we present next, there is an alternative for estimating $\gama_k$ which explicitly exploits the additional supports information and leads to a significantly better error.

Before proceeding, we would like to demonstrate that employing the intermediate dictionaries, in their row-restricted versions, introduces a bias in the oracle estimators. For the sake of simplicity, consider a two-layer model, and separate the effective dictionary into two parts:
\begin{equation}
\y = \D_{(2)}^{\Lambda_2} \gama_2^{\Lambda_2} +\e = \D_1^{\Lambda_1} \D_2^{\Lambda_1,\Lambda_2} \gama_2^{\Lambda_2} +
\D_1^{\Lambda_1^c} \D_2^{\Lambda_1^c,\Lambda_2} \gama_2^{\Lambda_2} +\e.
\end{equation}
The above expression clearly shows that employing only $\D_1^{\Lambda_1} \D_2^{\Lambda_1,\Lambda_2} $ to   estimate $\gama_2$ simply ignores the second term, leading to a bias in the estimate. In this simple two-layer example, this bias can be expressed as
\begin{equation}
\b(\hat{\gama}_2) = 
\E \left[\hat{\gama}_2 \right] -\gama_2 
=
\left( \D_1^{\Lambda_1} \D_2^{\Lambda_1,\Lambda_2} \right)^{\dagger} \D_1^{\Lambda_1^c} \D_2^{\Lambda_1^c,\Lambda_2} \gama_2^{\Lambda_2},
\end{equation}
and generally, for $k$ layers, the bias becomes:
\begin{equation}
\b(\hat{\gama}_k) 
=
\left( \D_1^{\Lambda_1} \D_2^{\Lambda_1,\Lambda_2}\cdots \D_k^{\Lambda_{k-1},\Lambda_k} \right)^{\dagger} 
\left(
\D_{(k)}^{\Lambda_k}
-
\D_1^{\Lambda_1} \D_2^{\Lambda_1,\Lambda_2}\cdots \D_k^{\Lambda_{k-1},\Lambda_k}
\right)
\gama_2^{\Lambda_2}.
\end{equation}

\subsection{Oracle Performance for a Holistic Pursuit}
We have seen that the layer-wise and the projection approaches cannot be optimal as they both ignore some information. But how can one use all the model information simultaneously?
In this section we provide the answer to this question when the true supports are known. The solution is based on the synthesis-analysis understanding we have presented above, and the corresponding holistic approach. Analyzing the recovery error of this strategy will result in a significantly improved result that would confirm that the whole is more than merely the sum of its parts.

The synthesis-analysis dual interpretation provides a way to use the mid-layers supports when estimating the last layer. Indeed, as we have shown, $\gama_k$ does not lie in $\mathbb{R}^{s_k}$, but rather in the kernel of $\Phib^{\Lambda_k}$, which we define in \Cref{phi_matrix}. Therefore, the optimal Oracle estimator is:
\begin{equation}
\hat{\gama}_k^{\Lambda_k} = \underset{\gama_k^{\Lambda_k}}{\argmin}
\norm{\y - \D_{(k)}^{\Lambda_k} \gama_k^{\Lambda_k}}_2
~~ \st ~\gama_k^{\Lambda_k} \in \mathrm{ker} \{ \Phib^{\Lambda_k} \}.
\end{equation}
While this problem might look somewhat challenging, it admits a surprisingly simple solution. Let us define $\K$ to be an orthogonal matrix that spans the null space of $\Phib^{\Lambda_k}$.
Such a matrix can be obtained by computing the singular-value decomposition (SVD) of $\Phib^{\Lambda_k}$, and choosing the $s_k-r$ right-singular vectors corresponding to the zero singular values, where $r$ is the rank of $\Phib^{\Lambda_k}$. With it, we might rewrite the objective simply as:
\begin{equation} \label{Holistic_Oracle_K_matrix}
\hat{\alfa} = \underset{\alfa}{\argmin} \norm{\y - \D_{(k)}^{\Lambda_k} \K \alfa}_2,
\end{equation}
where $\alfa$ is of length $s_k-r$, and then choosing $\hat{\gama}_k^{\Lambda_k} = \K \hat{\alfa}$. The corresponding Oracle estimator for this problem (once more, given the support $\Lambda_k$), is given by
\begin{equation}
\hat{\gama}_k^{\Lambda_k} = \K \hat{\alfa} = \K \left( 
\D_{(k)}^{\Lambda_k} \K \right)^{\dagger} \y.
\end{equation}
Since the columns of $\K$ are orthonormal, the error in $\hat{\gama}_k^{\Lambda_k}$ is simple to analyze:
\begin{equation}
\E \norm{\gama_k-\hat{\gama}_k}_2^2 =
\E \norm{\K (\alfa - \hat{\alfa}) }_2^2
=\E \norm{\alfa - \hat{\alfa} }_2^2.
\end{equation}
The error in $\hat{\alfa}$ will be the single-layer Oracle error employed above in \Cref{Oracle_bounds_one_layer}, where the dictionary is given by $\D_{(k)}^{\Lambda_k} \K$. 
Using again the orthonormality of $\K$, one can bound the singular-values of $\D_{(k)}^{\Lambda_k} \K$ employing the RIP of $\D_{(k)}^{\Lambda_k}$,
\begin{align}
(1-\delta_{s_k}^{\D_{(k)}}) \norm{\alfa}_2^2 = &
(1-\delta_{s_k}^{\D_{(k)}}) \norm{\K \alfa}_2^2 
\leq \norm{\D_{(k)}^{\Lambda_k} \K \alfa}_2^2  \\
\leq & (1+\delta_{s_k}^{\D_{(k)}}) \norm{\K \alfa}_2^2 =
(1+\delta_{s_k}^{\D_{(k)}}) \norm{\alfa}_2^2.
\end{align}
This way, the the recovery error bounds for the Holistic Oracle estimator are\footnote{We omit the Holistic Oracle estimator performance proof since it is similar to the Projection recovery error bounds proof appears in \Cref{appendix_projection}}
\begin{equation}
\sigma^2 \left( s_k - r \right)
\frac{c_{k_1}}{1+\delta_{s_k}^{\D_{(k)}}}
\leq
\E \norm{\gama_i-\hat{\gama}_i}_2^2 
\leq 
\sigma^2 \left( s_k - r \right)
\frac{c_{k_2}}{1-\delta_{s_k}^{\D_{(k)}}},
\end{equation}
where $c_{k_1}$ and $c_{k_2}$ are the same as the constants defined for the Projection Oracle case.

As can be seen, the error is now proportional to the dimension of the kernel space of $\Phib^{\Lambda_k}$: $s_k - r$. This reveals the significant advantage of this multi-layer sparse construction. When employing this estimator, the recovery error decreases by a factor of $\frac{s_k - r}{s_k}$ compared to previous approaches. In addition, this demonstrates the crucial role of the matrix $\Phib^{\Lambda_k}$, as it determines to what extent the holistic version is better than the projection approach. For example, when the intermediate dictionaries are sparse and the non-zero coefficients of $\gama_k$ are randomly chosen (as done in \cite{sulam2017multi}), $\Phib^{\Lambda_k}$ is the zero matrix with high probability and its rank is zero. In such a case, the performance of the Holistic Oracle estimator is the same as the one obtained projection Oracle estimators. However, when the dictionaries are dense(r) and random, $\Phib^{\Lambda_k}$ has a full row rank, and thus $r = \sum_{i=1}^{k-1} \ell_i$, resulting in a significantly performance improvement.


\section{The Holistic Pursuit}
\label{Sec:HolisticPursuit}

Given the understanding of the benefits of exploiting the mid-layer co-supports, in this section we undertake the, perhaps, more interesting question: how can we design a pursuit algorithm that can implement these ideas in practice? 
Note that the estimation of the intermediate layers' zeros, or co-support, should be done with care, as their wrong estimation would cause a biased error by possibly projecting onto the wrong subspace.

In what follows we present an algorithm to estimate the intermediate co-supports and the corresponding matrix $\K$, resulting in a significant performance improvement. We name this approach the ``Holistic Pursuit'', as it gives the solution for the whole system simultaneously. For simplicity, we shall assume that one has access to the knowledge of the number of co-support elements at each layer, $\ell_i$. Withal, one could of course devise strategies in order to estimate these values if they are unknown, and we will comment on this towards the end of this section. We depict this algorithm in \Cref{Holistic_Pursuit_Algorithm}.

\begin{algorithm}
\caption{The Holistic Pursuit}
\label{Holistic_Pursuit_Algorithm}
\textbf{Input} 
\begin{itemize}
\item Signal $\y$ and dictionaries $\left\{ \D_i \right\}_{i=1}^k$.
\item ML-SC parameters: the mid-layer co-sparsity levels - $\left\{ \ell_i\right\}_{i=1}^{k-1}$ and the deepest layer sparsity - $s_k$.
\item The mid-layers minimum absolute value - $\left\{ \gama_i^{\min} \right\}_{i=1}^{k-1}$.
\end{itemize}
\textbf{Output} 
\begin{itemize}
\item Set of representations $\{ \hat{\gama}_i \}_{i=1}^k$.
\end{itemize}
\textbf{Initialization} 
\begin{itemize}
\item Initialize the mid-layer co-supports:
$\hat{\Lambda}^c_i = \emptyset ~ \forall 1 \leq i \leq k-1$.
\item Initialize the matrix that spans $\gama_k$'s subspace: $\K = \I_{m_k \times m_k}$.
\end{itemize}
\textbf{Holistic iterations:} preform the following steps for $\ell^{tot} = \sum_{i=1}^{k-1} \ell_i$ times in order to estimate $\K$:
\begin{enumerate}
\item Estimate $\hat{\gama}_k$ using the current estimation of $\K$ with \Cref{Constraint_Basis_Pursuit}:
\begin{equation}
\underset{\alfa,\gama_k}{\min}~ \frac{1}{2} \norm{\y - \D_{(k)} \K \alfa }_2^2 + \eta \norm{\gama_k}_1  \text{ s.t. } \gama_k = \K\alfa.
\end{equation}
\item Select a layer $g$ on which to add a co-support element, using \Cref{Holistic_Pursuit_Choose_Layer_g} (see below).
\item Find a new element $j$ which is the minimum absolute value in layer $g$: \\ $j \gets \argmin \abs{\D_{(g+1,k)}^{\hat{\Lambda}_g} \hat{\gama}_k}$.
\item Update the co-support estimation $\hat{\Lambda}^c_g$, and the corresponding $\K$: \\ $\K \gets \bigcup_{i=1}^{k-1} \mathrm{ker} \left\{ \D_{(i+1,k)}^{\hat{\Lambda}_i^c}  \right\}$.
\end{enumerate}
\textbf{Holistic final step:} Use the found subspace to estimate $\gama_k$:
\begin{enumerate}
\item Estimate $\hat{\gama}_k$ using the current $\K$, with \Cref{Constraint_Basis_Pursuit}.
\item Propagate $\hat{\gama}_k$ to the mid-layer representations: 
\begin{equation}
\hat{\gama}_i \gets \D_{(i+1,k)} \hat{\gama}_k,~ \forall~ 1 \leq i \leq k-1.
\end{equation}
\end{enumerate}
\end{algorithm}

\subsection{The Holistic Pursuit Algorithm}
The Holistic pursuit consists of two main steps, that are to be iterated. In the first step, one aims to estimate the sparse inner-most $\gama_k$ given the intermediate layers co-support elements that have been found in previous iterations -- initialized as the empty set. This estimation amounts to solving a constrained sparse coding problem, in essence searching for a sparse $\gama_k$ such that it is orthogonal to the rows of the previously found mid-layer co-support. In other words, we are interested in solving the following sub-problem:
\begin{equation} \label{eq:constrained_problem}
\underset{\gama_k}{\min} \frac{1}{2} \norm{\y - \D_{(k)}\gama_k}_2^2  + \eta \norm{\gama_k}_1 
~~ \st ~\gama_k \in \mathrm{ker} \{ \Phib \}.
\end{equation}
Note this problem is almost the same as the one solved by the projection approach from \cite{sulam2017multi}, except in the latter there is no constraint as to the subspace on which $\gama_k$ should live. Here, we are explicitly requiring that $\gama_k\in\ker\{\Phib\}$, therefore incorporating more information to help find $\gama_k$. Such information is still relevant in cases where the dictionaries (from layer 2 to $k$) are sparse, providing an advantage over the Layered and the Projection approaches in those cases as well.

The constrained problem in \Cref{eq:constrained_problem} is convex, and can be solved with a variety of methods. We propose to address it by employing the Alternating Direction Method of Multipliers (ADMM) \cite{boyd2011distributed}, for which we introduce a variable split. In addition, we enforce the subspace constraint by means of the matrix $\K$, which spans the $\ker \{ \Phib \}$, resulting in:
\begin{equation} \label{Constraint_Basis_Pursuit}
\underset{\alfa,\gama_k}{\min}~ \frac{1}{2} \norm{\y - \D_{(k)} \K \alfa }_2^2 + \eta \norm{\gama_k}_1  \text{ s.t. } \gama_k = \K\alfa.
\end{equation}
In order to minimize this new constrained problem, we construct the (normalized) augmented Lagrangian penalty by introducing the dual variable $\u$,
\begin{equation}
\underset{\alfa,\gama_k,\u}{\min}~ \frac{1}{2} \norm{\y - \D_{(k)} \K \alfa }_2^2	+ \eta \norm{\gama_k}_1 + \frac{\rho}{2} \norm{\K \alfa - \gama_k + \u}_2^2.
\end{equation}
This can now be minimized iteratively by alternating minimization with respect to $\alfa$, $\gama_k$ and the dual variable $\u$, and we detail this process in \Cref{ADMM_algorithm}. As can be seen, this minimization reduces to the iteration of simple operations: the problem in line (2) is solved in closed form by an entry-wise thresholding operator, while the step in line (3) reduces to a Least Squares estimate. Moreover the matrix that needs to be inverted for this step\footnote{This matrix is given by $\mathbf{H} = \left( (\D_{(k)}\K)^T \D_{(k)}\K \right)^{\dagger}$.} can be precomputed in advance, saving computations.

\begin{algorithm} [H]
\caption{ADMM for Constraint Lasso}
\label{ADMM_algorithm}
\begin{algorithmic}[1]
\State \textbf{while} not converged \textbf{do}
\State \hspace*{.25cm} $\gama_k \gets \underset{\gama}{\argmin}~ \eta \norm{\gama}_1 + \frac{\rho}{2} \norm{\K \alfa - \gama + \u}_2^2$~;
\State \hspace*{.25cm} $\alfa \gets \underset{\alfa}{\argmin} \frac{1}{2} \norm{\y - \D_{(k)} \K \alfa }_2^2 + \frac{\rho}{2} \norm{\K \alfa - \gama + \u}_2^2$~;
\State \hspace*{.25cm} $\u \gets \u + \rho (\K \alfa - \gama_k)$~;
\end{algorithmic}
\end{algorithm}


The second stage in the Holistic Pursuit is the estimation of the co-support from the intermediate layers, using the obtained $\hat{\gama}_k$ from the previous step. A number of options are available for such a process. One could attempt, for instance, to estimate the entire $\ell^{tot}$ co-support elements at once. However, this is prone to yield mistakes that would cause the projection of $\hat{\gama}_k$ onto an incorrect subspace. For this reason, we limit the search of these elements to one co-support element at a time. As there are multiple layers to choose from, the chosen layer should maximize the chances of obtaining a correct element of the co-support. In this spirit, one prefers a layer with a high value of $\gama_i^{\min}$ and with many co-support elements yet to be found, $\ell_i - \abs{\hat{\Lambda}_i^c}$. We propose choosing the layer by:
\begin{equation} \label{Holistic_Pursuit_Choose_Layer_g}
g \gets 
\underset{ i : ~ 
\abs{\hat{\Lambda}_i^c}< \ell_i
}{\argmax} ~~
\gama_i^{\min} 
\left(1 + \frac{1 + \mu^i_R \left(
\ell_i -\abs{\hat{\Lambda}_i^c} -1
\right)}{\sqrt{\ell_i - \abs{\hat{\Lambda}_i^c} }}\right)^{-1},
\end{equation}
where $\mu^i_R$ is the \emph{row mutual coherence} of dictionary $\D_{(i+1,k)}$:
\begin{equation} \label{eq:rows_mutual_coherence}
\mu_R \triangleq \max_{p\ne j} \abs{\d_p \d_j^T}.
\end{equation}
The choice of this particular expression will become clear later in the analysis of the algorithm. After choosing the layer $g$, the entry to be updated is chosen as the minimum absolute value of the inner products between $\gama_k$ and the rows of $\D_{(g,k)}$.

Once a new element has been found and added to the co-support, one must update the matrix $\K$ -- spanning a subspace that now includes the new added row -- which is to be used in the next iteration in the estimation of $\gama_k$. This way, the estimation of the inner-most representation becomes more and more accurate as the iterations proceed, and the algorithm continues until all the intermediate layers co-support elements are found. Finally, once all co-support rows in $\Phib$ have been identified, one estimates $\gama_k$ one last time by solving the problem in \Cref{Constraint_Basis_Pursuit}, and then computes the intermediate representations as: $\forall~ 1 \leq i \leq k-1 ~~ \hat{\gama}_i \gets \D_{(i+1,k)} \hat{\gama}_k.$

Before moving to the analysis of this algorithm, we would like to stress that the Holistic Pursuit suggests a general framework for search of sparse representations under this dual synthesis-analysis model. Indeed, while we have made each iteration specific, the general components of the algorithm can be changed to either more or less accurate steps. For example, one could estimate $\gama_k$ in a greedy way instead of employing a basis pursuit formulation. Alternatively, one might prefer to estimate blocks of the co-support elements of layer $g$ in each iteration, or even re-evaluate part of the co-support found and possibly replace elements, somewhat in the lines of the COSAMP \cite{needell2009cosamp} and Subspace Pursuit \cite{dai2009subspace} algorithms.

As stated in the beginning of this section, the presented algorithm assumes that the set of minimal entries, $\{ \gama_i^{\min} \}_{i=1}^{k-1}$, and the set of co-sparsity levels, $\{ \ell_i \}_{i=1}^{k-1}$, are assumed given. With small modifications, however, one could adapt the Holistic Pursuit to handle those cases in which this information is not available. For example, if the set minimal entries in the representations is unknown, one could remove $\gama_i^{\min}$ from the rule that determines which layer to address next -- essentially assuming that all layers use the same minimal value. On the other hand, if the set of the intermediate layer co-sparsity levels, $\{ \ell_i \}_{i=1}^{k-1}$, is not given, one could search for the minimum absolute value across all layers, or even search for the entry that has the smallest overall effect on the residual. This process should be iterated until some condition or threshold is met, such as having reached a total number of co-support elements, \emph{global co-sparsity} level $\ell^{tot}$, or a residual noise level.

\subsection{Theoretical Analysis}

In what follows, we demonstrate and theoretically analyze how the Holistic Pursuit leverages the constraints in the model, providing better estimates than previous approaches. We divide our derivations according to the two steps of the algorithm: the synthesis-pursuit, and the analysis-pursuit. 

Let us start by equivalently rewriting the objective function of the first step appears in \Cref{eq:constrained_problem}:
\begin{equation} \label{eq:constrained_lasso}
	\tilde{\gama}_k \gets \underset{\gama_k} {\argmin} \frac{1}{2}
    \|\y - \D_{(k)} \gama_k \|_2^2 + \eta \|\gama_k\|_1~\text{ s.t.}~\Phib \gama_k = \textbf{0}.
\end{equation}
This is an interesting constrained Lasso problem, that has been recently studied in \cite{james2013penalized}. We bring here the performance guarantees provided in that work, while allowing ourselves to omit tedious details that deviate from our main message. These can be easily found in \cite{james2013penalized}. 

\begin{lemma}{(Corollary 1 \cite{james2013penalized})}
Under mild assumptions on the dictionary $\D_{(k)}\in \mathbb{R}^{n \times m_k}$, on the constrained matrix $\Phib \in \mathbb{R}^{\ell \times m_k}$ and on the true representation $\gama_k\in \mathbb{R}^{m_k}$ (see \cite{james2013penalized}), if $\eta = (4\sqrt{2} -2)\sigma \sqrt{\frac{\log m_k}{n}}$, $\gama_k$ obeys the constraint $\Phib\gama_k = \0$, and $\y = \D_{(k)}\gama_k +\e$, where $\e \sim \mathcal{N}(0,\sigma^2 \I)$ is a random noise vector, then with probability at least $1-2/m_k$,
\begin{equation}
\norm{\tilde{\gama}_k - \gama_k}_2^2 \leq \max \{s_k - \ell~,~ \ell \} \frac{64\sigma^2 \log m_k}{\kappa_L^2 n}
,\end{equation}
where $\kappa_L$ is a constant related to the dictionary and the constrained matrix.
\end{lemma}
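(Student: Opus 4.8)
The statement is a direct transcription of Corollary~1 of \cite{james2013penalized}, so the plan is to reconstruct the standard penalized-regression oracle argument, adapted to the linear equality constraint $\Phib\gama_k=\0$. First I would exploit optimality: since $\tilde{\gama}_k$ minimizes the objective in \Cref{eq:constrained_lasso} and the ground truth $\gama_k$ is itself feasible (it satisfies $\Phib\gama_k=\0$ by hypothesis), the two are directly comparable. Writing $\mathbf{h}=\tilde{\gama}_k-\gama_k$ for the error vector and substituting $\y=\D_{(k)}\gama_k+\e$ into the basic inequality
\begin{equation}
\frac{1}{2}\norm{\y-\D_{(k)}\tilde{\gama}_k}_2^2+\eta\norm{\tilde{\gama}_k}_1 \leq \frac{1}{2}\norm{\y-\D_{(k)}\gama_k}_2^2+\eta\norm{\gama_k}_1,
\end{equation}
and rearranging, yields the reduced inequality
\begin{equation}
\frac{1}{2}\norm{\D_{(k)}\mathbf{h}}_2^2 \leq \e^T\D_{(k)}\mathbf{h}+\eta\left(\norm{\gama_k}_1-\norm{\tilde{\gama}_k}_1\right).
\end{equation}

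The crucial structural observation is that both iterates live in $\ker\{\Phib\}$, hence so does $\mathbf{h}$; this is what will later shrink the effective dimension. Next I would control the empirical-process term via $\e^T\D_{(k)}\mathbf{h}\leq\norm{\D_{(k)}^T\e}_\infty\norm{\mathbf{h}}_1$. Because $\e\sim\mathcal{N}(\0,\sigma^2\I)$ and (under the normalization packaged into the ``mild assumptions'' of \cite{james2013penalized}) the columns of $\D_{(k)}$ have Euclidean norm of order $\sqrt{n}$, each coordinate of $\D_{(k)}^T\e$ is a centered Gaussian of standard deviation of order $\sigma\sqrt{n}$; a union bound over the $m_k$ coordinates together with the Gaussian tail shows that on an event of probability at least $1-2/m_k$ one has $\norm{\D_{(k)}^T\e}_\infty\leq\tfrac{1}{2}\eta$, which is precisely what the calibrated choice $\eta=(4\sqrt{2}-2)\sigma\sqrt{\log m_k/n}$ guarantees. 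Plugging this in and splitting $\norm{\mathbf{h}}_1$ over the support $\Lambda_k$ of $\gama_k$ and its complement produces a cone condition of the form $\norm{\mathbf{h}_{\Lambda_k^c}}_1\leq 3\norm{\mathbf{h}_{\Lambda_k}}_1$.

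The final ingredient is a restricted eigenvalue (compatibility) condition for $\D_{(k)}$, but restricted to the intersection of the above cone with $\ker\{\Phib\}$; the infimum of $\norm{\D_{(k)}\mathbf{h}}_2^2/\norm{\mathbf{h}}_2^2$ over that set is exactly the constant $\kappa_L^2$ appearing in the bound. Combining the cone-restricted compatibility inequality with the reduced inequality converts the lower bound on $\norm{\D_{(k)}\mathbf{h}}_2^2$ into the stated $\ell_2$ bound on $\mathbf{h}$. The effective-sparsity factor $\max\{s_k-\ell,\ell\}$ enters here: confining $\mathbf{h}$ to $\ker\{\Phib\}$, of codimension $\ell=\rank\{\Phib\}$, removes $\ell$ degrees of freedom from the $s_k$-dimensional support, so the relevant statistical dimension is $s_k-\ell$ (mirroring the oracle degrees-of-freedom count $s_k-r$ of \Cref{Sec:Oracle}), while the complementary term $\ell$ arises from the direction along which the constraint itself must be resolved, and the larger of the two governs the effective complexity of the feasible error set.

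The only genuinely delicate step is verifying the restricted eigenvalue condition on $\ker\{\Phib\}$ and pinning down the resulting constant $\kappa_L$: this is where the geometry of $\D_{(k)}$ and of $\Phib$ interact nontrivially, and it is exactly the piece that \cite{james2013penalized} (and, correspondingly, this lemma) abstracts into a hypothesis rather than deriving from first principles. Everything else --- the basic inequality, the Gaussian maximal inequality fixing $\eta$, and the cone condition --- is routine Lasso bookkeeping, and the precise numerical constants ($64$, $4\sqrt{2}-2$) are simply what falls out of tracking the Gaussian tail bound and the compatibility argument through without slack.
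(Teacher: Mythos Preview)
The paper does not actually prove this lemma: it is quoted verbatim as Corollary~1 of \cite{james2013penalized}, and the surrounding text explicitly says the authors ``bring here the performance guarantees provided in that work, while allowing ourselves to omit tedious details\ldots These can be easily found in \cite{james2013penalized}.'' So there is no in-paper proof to compare against.

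That said, your reconstruction is the standard constrained-Lasso oracle argument and is essentially what one finds in \cite{james2013penalized}: basic inequality from optimality of $\tilde{\gama}_k$ against the feasible truth, Gaussian maximal inequality to calibrate $\eta$ and obtain the high-probability event, the resulting cone condition on the error $\mathbf{h}$, and finally a restricted-eigenvalue/compatibility constant $\kappa_L$ defined over the intersection of the cone with $\ker\{\Phib\}$. Your identification of where the effective-dimension factor $\max\{s_k-\ell,\ell\}$ enters is a little informal---in the reference it falls out of how the compatibility constant and the $\ell_1$--$\ell_2$ conversion are carried through when the feasible set is a subspace of codimension $\ell$---but the mechanism you describe (loss of $\ell$ degrees of freedom on the support, balanced against the $\ell$ directions needed to parameterize the constraint) is the right intuition. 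Nothing in your sketch is wrong; it simply supplies what the paper deliberately outsources.
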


In comparison, solving the same problem in \Cref{eq:constrained_lasso} but without the constraint:
\begin{equation} \label{eq:unconstrained_lasso}
	\tilde{\gama}_k^{Unconst} \gets \underset{\gama_k} {\argmin} \frac{1}{2}
    \|\y - \D_{(k)} \gama_k \|_2^2 + \eta \|\gama_k\|_1,
\end{equation}
with $\eta = 4\sigma \sqrt{\frac{\log m_k}{n}}$ leads to the following recovery error bound \cite{james2013penalized}:
\begin{equation}
\norm{\tilde{\gama}_k^{Unconst} - \gama_k}_2^2 \leq s_k \frac{64\sigma^2 \log m_k}{\kappa_L^2 n}.
\end{equation}
As can be observed, the additional intermediate layer information reduces the recovery error from being proportional to $s_k$ to being proportional\footnote{We conjecture that a stronger claim could be formulated in terms of $s_k - \ell$, as opposed to $\max\{s_k-\ell, \ell \}$} to $\max \{s_k - \ell~,~ \ell \}$.
This is precisely the motivation behind the iterations in the Holistic Pursuit: the recovery error of $\gama_k$ is reduced at every iteration, leading to a higher probability to estimate a new co-support element from the intermediate layers.

The next lemma analyzes the second step of the algorithm: the pursuit of a new co-support element. This result is based on the derivations in \cite{peleg2013performance}. However, we restrict the analysis to the probability of finding one true co-support element, resulting in looser conditions. Its proof can be found in \Cref{app:HolisticAnalysis}.

\begin{lemma} \label{lemma_threshold}
Let $\hat{\gama}_k = \gama_k + \e$ be the estimation of the deepest layer, let $\hat{\Lambda}_i^c$ be the estimated co-support in the $i^{th}$ layer, where
$\sum_{i=1}^{k-1} \abs{ \hat{\Lambda}_i^c} = j-1$,
and let $\mu^i_R$ be the \emph{row-wise} mutual-coherence defined in \Cref{eq:rows_mutual_coherence}.
If
\begin{equation}
\norm{\e}_2
\leq
\max_{i : 
\abs{\hat{\Lambda}_i^c}< \ell_i
}
\gama_i^{\min} 
\left(1 + \frac{1 + \mu^i_R \left(
\ell_i -\abs{\hat{\Lambda}_i^c} -1
\right)}{\sqrt{\ell_i - \abs{\hat{\Lambda}_i^c} }}\right)^{-1}
\end{equation}
then the Holistic Pursuit algorithm succeeds in its $j^{th}$ iteration in recovering a new element from the mid-layers co-support.
\end{lemma}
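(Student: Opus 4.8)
The plan is to treat the analysis-pursuit (co-support identification) step as a thresholding-type recovery problem for the cosparse analysis model, in the spirit of \cite{peleg2013performance}. Concretely, for the layer $g$ selected in Step~2 of \Cref{Holistic_Pursuit_Algorithm}, a new index is declared to belong to the co-support as the minimizer of $\abs{\D_{(g+1,k)}^{\hat{\Lambda}_g} \hat{\gama}_k}$ over the rows not yet assigned to the co-support. Since $\gama_g = \D_{(g+1,k)} \gama_k$, the goal reduces to showing that this minimizer indeed lands on a true zero of $\gama_g$. I would first assume, as is implicit in the statement, that the previously identified co-support elements are correct, so that $\hat{\Lambda}_i^c \subseteq \Lambda_i^c$ and the number of undiscovered co-support rows in layer $g$ equals $t_g := \ell_g - \abs{\hat{\Lambda}_g^c}$.

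The heart of the argument is a signal-plus-noise decomposition $\langle \d_p, \hat{\gama}_k \rangle = \langle \d_p, \gama_k \rangle + \langle \d_p, \e \rangle$, where $\d_p$ denotes the $p$-th row of $\D_{(g+1,k)}$. For a true support row $q$ (a nonzero of $\gama_g$), the signal part obeys $\abs{\langle \d_q, \gama_k\rangle} = \abs{(\gama_g)_q} \geq \gama_g^{\min}$, whereas for a true co-support row $p$ it vanishes. Assuming normalized rows, the Cauchy--Schwarz bound $\abs{\langle \d_q, \e\rangle} \leq \norm{\e}_2$ then gives the support-side lower bound $\min_{q \in \Lambda_g}\abs{\langle\d_q,\hat{\gama}_k\rangle} \geq \gama_g^{\min} - \norm{\e}_2$. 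The identification succeeds whenever the smallest inner product over the undiscovered co-support rows is strictly below this value, so it remains to upper bound $\min_{p}\abs{\langle \d_p, \e\rangle}$ over those $t_g$ rows.

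For that co-support-side bound I would use an averaging argument rather than a worst-case one: collecting the $t_g$ undiscovered co-support rows into a sub-matrix $\D_R$, one has $\min_p \abs{\langle \d_p, \e\rangle}^2 \leq \tfrac{1}{t_g}\norm{\D_R \e}_2^2 \leq \tfrac{1}{t_g}\norm{\D_R}_2^2 \norm{\e}_2^2$, and a Gershgorin/coherence bound on the Gram matrix $\D_R \D_R^T$ (unit diagonal, off-diagonals bounded by $\mu^g_R$) yields $\norm{\D_R}_2^2 \leq 1 + \mu^g_R(t_g - 1)$. Thus $\min_p \abs{\langle \d_p, \e\rangle} \leq \tfrac{\sqrt{1+\mu^g_R(t_g-1)}}{\sqrt{t_g}}\norm{\e}_2$, and crudely upper bounding the square root by $1+\mu^g_R(t_g-1)$ gives the conservative factor appearing in the lemma. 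Combining the two estimates, a sufficient condition for success in layer $g$ is $\norm{\e}_2\bigl(1 + \tfrac{1+\mu^g_R(t_g-1)}{\sqrt{t_g}}\bigr) \leq \gama_g^{\min}$, which rearranges to the per-layer inequality in the statement; the $\max$ over layers then follows immediately because Step~2 selects $g$ as the $\argmax$ of exactly this expression, so validity for the best layer suffices.

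The main obstacle I anticipate is the co-support-side estimate: turning the worst-case noise term into the much smaller $\tfrac{1}{\sqrt{t_g}}$-type averaged quantity is what makes recovery possible even when $\norm{\e}_2$ is not tiny, and it requires correctly invoking the row-coherence bound on the spectral norm of the undiscovered-co-support sub-matrix. Secondary care is needed in (i) justifying the normalization of the rows of $\D_{(g+1,k)}$, so that both the support-side Cauchy--Schwarz bound and the unit-diagonal Gram estimate hold, and (ii) tracking the assumption $\hat{\Lambda}_i^c \subseteq \Lambda_i^c$, which guarantees the undiscovered co-support has exactly $t_g$ elements and that the selection rule in \Cref{Holistic_Pursuit_Choose_Layer_g} aligns with the success condition.
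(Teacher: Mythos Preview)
Your proposal is correct and follows the same overall architecture as the paper's proof: the same signal-plus-noise decomposition, the same Cauchy--Schwarz support-side lower bound $\gama_g^{\min}-\norm{\e}_2$, and the same final step of matching the selection rule in \Cref{Holistic_Pursuit_Choose_Layer_g} to the $\max$ over layers.

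The one genuine difference is in the co-support-side upper bound. The paper argues by constructing an explicit worst-case error direction $\tilde{\e}\propto(\D_{(i+1,k)}^{\Lambda_i^c/\hat{\Lambda}_i^c})^T\mathbf{1}$, first in the orthonormal-row case and then heuristically extending to the $\mu_R^i$-coherent case, arriving directly at the factor $(1+\mu_R^i(t_i-1))/\sqrt{t_i}$. Your route via the averaging inequality $\min_p|\langle\d_p,\e\rangle|^2\le t_g^{-1}\norm{\D_R\e}_2^2$ together with the Gershgorin bound $\norm{\D_R}_2^2\le 1+\mu_R^g(t_g-1)$ is more elementary and in fact yields the sharper intermediate bound $\sqrt{1+\mu_R^g(t_g-1)}/\sqrt{t_g}$ before you relax it to match the lemma. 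So your argument both recovers the paper's result and makes transparent that the stated constant is not tight.
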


We now combine both lemmas above in the form an overall theorem, providing a theoretical guarantee for the holistic pursuit. 
\begin{theorem} \label{Th:HolisticPursuit}
Consider an ML-SC signal $\x$ with intermediate layer co-sparsity levels $\{ \ell_i \}_{i=1}^{k-1}$ and a deepest layer sparsity of $s_k$, contaminated by random noise $\e \sim \mathcal{N}(0,\sigma^2 \I)$, resulting in the observation $\y=\x +\e$.
Under the mild assumptions appearing in Corollary 1 in \cite{james2013penalized}, and if for every $j^{th}$ iteration the following holds:
\begin{equation} \label{eq:main_theorem_condition}
\sqrt{\max \{s_k - j~,~ j \}}  \frac{8\sigma}{\kappa_L} \sqrt{\frac{ \log m_k}{n}}
\leq 
\max_{i : 
\abs{\hat{\Lambda}_i^c}< \ell_i
}
\gama_i^{\min} 
\left(1 + \frac{1 + \mu^i_R \left(
\ell_i -\abs{\hat{\Lambda}_i^c} -1
\right)}{\sqrt{\ell_i - \abs{\hat{\Lambda}_i^c} }}\right)^{-1},
\end{equation}
then, with probability exceeding $\left( 1-2/m_k \right)^{\ell^{tot}}$, where $\ell^{tot} = \sum_{i=1}^{k-1} \ell_i$,
the Holistic Pursuit succeeds in recovering the support of all sparse representations $\gama_i$, and the recovery error is bounded by
\begin{equation}
\norm{\hat{\gama}_k^{Holistic} - \gama_k}_2^2 
\leq \max \{s_k - \ell^{tot}~,~ \ell^{tot}\} \frac{64\sigma^2 \log m_k}{\kappa_L^2 n}.
\end{equation}
\end{theorem}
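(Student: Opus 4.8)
The plan is to prove \Cref{Th:HolisticPursuit} by induction over the $\ell^{tot}$ holistic iterations, chaining the two lemmas while maintaining the invariant that every co-support element identified so far is correct. Under this invariant, the constraint matrix $\Phib$ assembled from the $j-1$ elements found before the $j$-th iteration consists only of genuine co-support directions, so the true deepest representation $\gama_k$ exactly satisfies $\Phib\gama_k = \0$. This is the crucial point that makes the constrained-Lasso lemma (Corollary 1 of \cite{james2013penalized}) applicable at each step: that result requires the ground-truth signal to obey the linear constraint imposed in \Cref{eq:constrained_lasso}.

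For the inductive step I would fix the $j$-th iteration and assume the invariant holds. First, invoking Corollary 1 of \cite{james2013penalized} with the current $\Phib$ (carrying the correct rows found so far) bounds the recovery error of the synthesis step: with probability at least $1-2/m_k$,
\[
\norm{\hat{\gama}_k - \gama_k}_2 \leq \sqrt{\max\{s_k - j,\, j\}}\,\frac{8\sigma}{\kappa_L}\sqrt{\frac{\log m_k}{n}},
\]
which is, up to the indexing convention for the number of already-identified constraints, the left-hand side of the hypothesis \Cref{eq:main_theorem_condition}. Writing $\e = \hat{\gama}_k - \gama_k$, this error vector now plays the role of the noise in \Cref{lemma_threshold}. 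The assumption \Cref{eq:main_theorem_condition} states exactly that this error bound does not exceed the threshold appearing in \Cref{lemma_threshold}, so its hypothesis is met and the analysis step of the $j$-th iteration recovers a true new element of the mid-layer co-support, re-establishing the invariant for iteration $j+1$.

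To assemble the global guarantee, I would condition on the success of the previous iterations and apply the chain rule: each conditional synthesis-step success has probability at least $1-2/m_k$, so all $\ell^{tot}$ iterations succeed with probability at least $(1-2/m_k)^{\ell^{tot}}$. On this event the algorithm has identified the entire co-support, so the final $\Phib$ carries all $\ell^{tot}$ correct rows, and applying Corollary 1 of \cite{james2013penalized} one last time with $\ell = \ell^{tot}$ yields
\[
\norm{\hat{\gama}_k^{Holistic} - \gama_k}_2^2 \leq \max\{s_k - \ell^{tot},\, \ell^{tot}\}\,\frac{64\sigma^2 \log m_k}{\kappa_L^2 n},
\]
after which the mid-layer representations are recovered with correct supports through the propagation $\hat{\gama}_i = \D_{(i+1,k)}\hat{\gama}_k$.

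The main obstacle is the coupling between the two steps inside the induction: the error bound at iteration $j$ (via the constrained-Lasso lemma) is only valid because the previous co-support identifications were exact, whereas the correctness of the current identification (via \Cref{lemma_threshold}) is only guaranteed because that error bound holds. I would therefore phrase the induction so that \emph{all co-support elements so far are correct} is the propagated hypothesis, and verify that the off-by-one in the count of constraints ($j-1$ rows present when seeking the $j$-th element) still lets the uniform requirement \Cref{eq:main_theorem_condition} dominate the realized Lasso error. The secondary subtlety is probabilistic: since all iterations share the same noise realization, the product $(1-2/m_k)^{\ell^{tot}}$ must be justified through conditional success probabilities and the chain rule rather than an independence assumption.
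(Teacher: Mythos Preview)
Your proposal is correct and matches the paper's approach: the paper does not give an explicit proof of \Cref{Th:HolisticPursuit} but simply states that it combines the constrained-Lasso lemma (Corollary 1 of \cite{james2013penalized}) with \Cref{lemma_threshold}, and your induction over the $\ell^{tot}$ iterations is exactly the intended chaining. Your identification of the two delicate points---the off-by-one between the $j-1$ constraints available when bounding the Lasso error at iteration $j$ versus the $j$ appearing in \Cref{eq:main_theorem_condition}, and the need to justify the product $(1-2/m_k)^{\ell^{tot}}$ via conditional probabilities rather than independence---is in fact more careful than the paper itself, which glosses over both.
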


While the above Theorem does succeed in posing clear conditions for success of the Holistic Pursuit, the terms of success are somewhat disappointing. On the positive side, we count the fact that the error is shown to be proportional to $s_k-\ell$, which agrees with the oracle analysis from \Cref{Sec:Oracle}. On the negative side of the scales, we must mention that the probability for success that seems to be weak, and the condition in \Cref{eq:main_theorem_condition} is too convoluted and unclear. Still, in the spirit of this work, which aims to propose a first of its kind Holistic Pursuit for the ML-SC model, we find this Theorem encouraging. Further work should be invested, both in devising new such algorithms, and improving their theoretical study. 

\section{Numerical Results}
\label{Sec:Experiments}


In this section we present numerical results that demonstrate the Holistic Pursuit algorithm while comparing it with previous approaches. 
We consider a two layer sparse model with a signal dimension of $n=50$, layer dimensions of $m_1 =100$ and $m_2 = 50$ and with dictionaries that were sampled from a Gaussian distribution, $d_1(i,j)~\sim \mathcal{N}(0, \frac{1}{n})$ and $d_2(i,j)~\sim \mathcal{N}(0, \frac{1}{m_2})$.
We set the deepest layer sparsity to be $s_2$, and the desired co-sparsity in the first layer, $\ell_1$. 
We synthesize signals from this model by first randomly choosing the supports of $\gama_2$, and then multiplying by the corresponding matrix $\K$ (computed with the SVD decomposition of the matrix $\Phib$) with a random vector $\alfa$ sampled from a Gaussian distribution -- as explained in \Cref{Sec:Synthesis_Interpretation_Issues}.
Finally, we add white Gaussian noise to the signals, $\e \sim \mathcal{N}(0, \mathbf{I} \sigma^2)$, to obtain the measurement vector $\y$. We study the recovery error of $\gama_2$ as function of $\ell_1$, and present the results in \Cref{fig:Holistic_Pursuit_results_line}.

\begin{figure}
\centering
\begin{subfigure}{.42\textwidth}
\includegraphics[width = \textwidth]{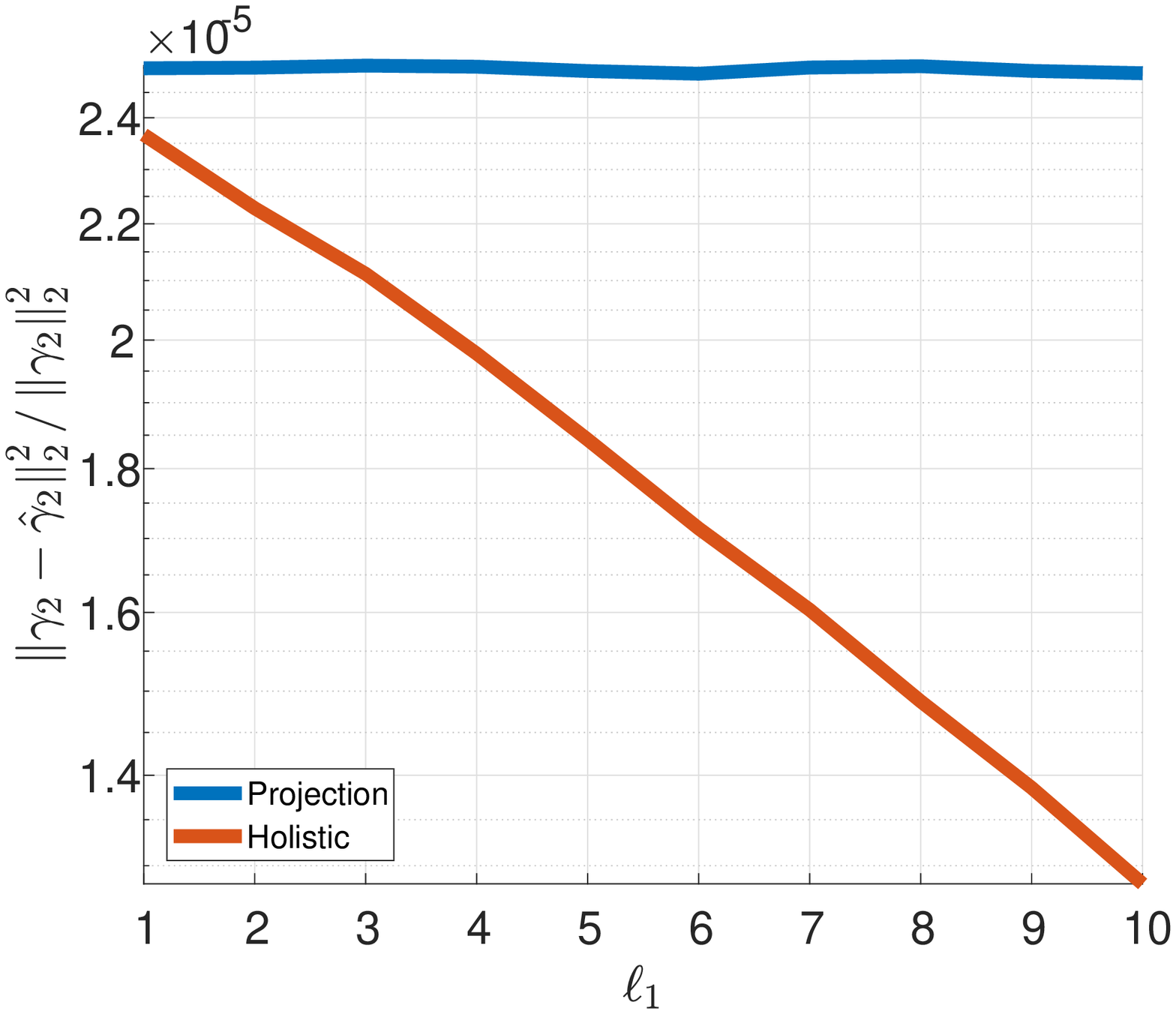}
\caption{Recovery error for the Holistic Pursuit algorithm and the outer-layer projection where the difference $s_2 - \ell_1 = 1$.}
\label{fig:Holistic_Pursuit_results_line}
\end{subfigure}
\hfil
\begin{subfigure}{.45\textwidth}
\includegraphics[width = \textwidth]{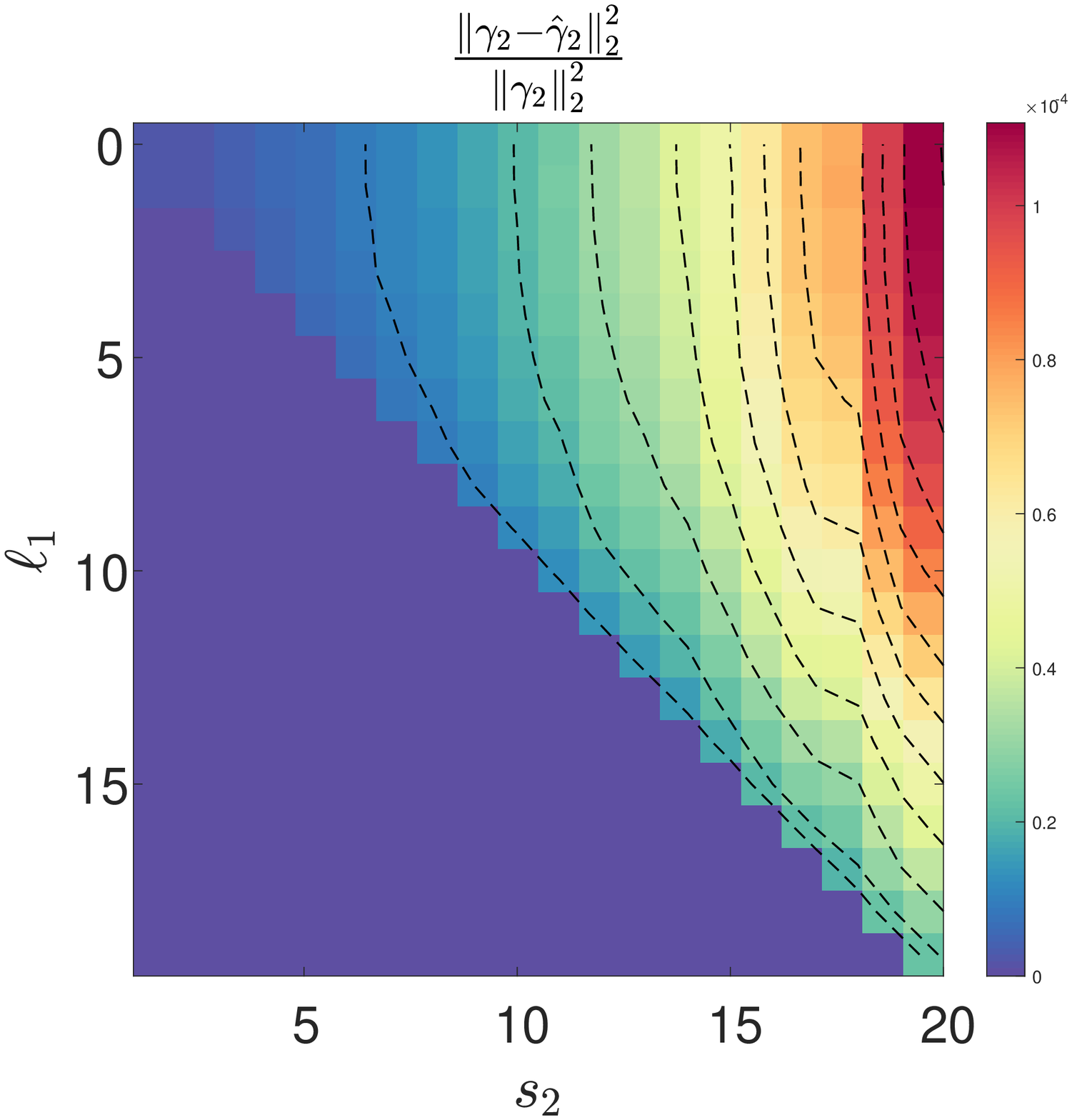}
\caption{Recovery error for the Holistic Pursuit as function of $\ell_1$ and $s_2$.}
\label{fig:Holistic_Pursuit_results_phase}
\end{subfigure}
\caption{Empirical recovery error for the Holistic Pursuit algorithm.}
\label{fig:NumericalResults}
\end{figure}

As discussed at length in  \Cref{Sec:Synthesis_Interpretation_Issues}, the layer-wise and projections approaches will not succeed in finding feasible estimates for $\gama_1$ and $\gama_2$.
The layer-by-layer algorithm will clearly fail, due to the very high cardinality of the first layer which is even bigger than the signal dimension, $s_1>n$. Moreover, $\gama_1$ is not the unique representation for $\x$ if one ignores the remaining layers.
The projection algorithm, on the other hand, would do somewhat of a better job in estimating $\gama_2$, alas it will obtain a dense estimate for $\gama_1$ when computing $\hat{\gama}_1 = \D_2\hat{\gama}_2$. The complete version of this algorithm, which attempts to correct the support in $\hat{\gama}_2$ if some constraints are not met, will eventually result in the zero-vector. 
This discussion exposes once more the fact that existing pursuit algorithms for the ML-SC model are suitable only for sparse dictionaries. All in all, we refrain from depicting the results from the layer-wise approach (as they do not provide competitive results), and we compare the Holistic Pursuit with the outer shell projection -- the main part in the projection pursuit (i.e., without backtracking and correcting its solution).

\Cref{fig:Holistic_Pursuit_results_line} presents the performance of both algorithms with a Signal to Noise Ratio (SNR) of 25 dB.
The penalty parameters, $\eta$, were set as the maximum value such that $ \norm{\y - \D_{(2)}\gama_2 }_2^2 \leq n \sigma^2 $.
One can see that, in both cases, as $\ell_1$ grows, the advantage of the Holistic Pursuit increases. This is to be expected, as the information of the co-support of $\gama_1$ becomes more significant and the effective dimension of the signal becomes smaller.

In \Cref{fig:HolisticPursuit_15db} we include further results for a SNR of 15 dB. Interestingly, the improvement of the Holistic Pursuit over the projection alternative is more significant in the high SNR scenario. For instance, when $\ell_1 = 10$ the Holistic algorithm reduces the error in 20\% in the 15 dB SNR case, and up to 50\% in the 25 dB SNR case.
 This behavior is explained by the fact that when the SNR is low, false detection in the estimation of the mid-layer co-supports are more likely, which in turn causes the selection of a wrong subspace on which to project $\hat{\gama}_2$. This also points to possible improvements for the proposed approach: in such cases, one should not try to estimate the complete co-support but rather we stop before its full recovery.

Before concluding, we depict in \Cref{fig:Holistic_Pursuit_results_phase} the recovery error for the Holistic Pursuit as a function of the enforced co-sparsity in the intermediate layer, $\ell_1$. In this case, signals were constructed exactly as described above, and the Holistic pursuit was run with different levels of the intermediate co-support. Recall that, as explained in \Cref{Sec:Synthesis_Interpretation_Issues}, in order to sample signals satisfying the model constraints one must require $s_2>\ell_1$. This figure clearly shows that, by explicitly leveraging the sparsity of $\gama_1$, our approach enables to recover denser representations with the same error as what the projection algorithm would have offered for a sparser signal. These iso-error curves are depicted in dashed black lines, which show the scaling of $s_2 - \ell_1$, as predicted by \Cref{Th:HolisticPursuit}.


\section{Conclusions}
\label{Sec:conclusions}
In this work we have revisited the multi-layer sparse model, and analyzed it in its most general (non-convolutional) form, providing the first known results for recovery of the model's sparse representations under random noise assumptions. The limitations of previous methods led us to propose a new interpretation of this multi-layer construction: this model can now be seen as a global synthesis construction with added analysis sparse priors. This understanding opened the door to both, a tighter theoretical analysis (such as uniqueness guarantees and oracle estimator performance) and the development of better pursuit algorithms to estimate the corresponding representations. This model has demonstrated, for the first time, the symbiotic effect of employing both synthesis and analysis priors on signals. While the Holistic Pursuit proposed above is a first implementation of these ideas, we envision several improvements to boost this algorithm. Naturally, future work should address the performance of this model on real data, for which a proper dictionary learning algorithm should be proposed. 

We have opted to consider the general sparse model and not dwell on the convolutional setting of \cite{sulam2017multi,papyan2016convolutional} in order keep the technical derivations simpler. While doing so, we have also considered the more general case of fully connected layers, as opposed to convolutional ones. While we see no fundamental problems that would prevent us from applying this dual synthesis-analysis interpretation to the convolutional setting, we also believe that this will call for a careful analysis and necessary adaptations. Not only it would be required to employ some of the tools already presented in previous works \cite{papyan2017working} (such as $\ell_{0,\infty}$ norm as a sparsity measure, shifted mutual coherence for dictionary characterization, etc) but others are likely to be needed as well. For instance, a convolutional ``$\ell_\infty$'' version of the co-sparsity measure would probably come into play. All these points indeed constitute interesting directions of future work. 

More broadly, the connections to deep neural networks has motivated much of our analysis. In this respect, just as the work in \cite{sulam2017multi,papyan2016convolutional} provided a multi-layer sparse model for convolutional neural networks, in this work we have presented and analyzed a multi-layer model for fully connected ones. More precisely, the forward pass in such fully connected networks can be seen as a pursuit algorithm for signals in the ML-SC model. This work goes further, however, as we have deepened the analysis of the this model, demonstrating that it is not empty and presenting a clear way to sample from it. In addition, we have shown (by deriving an oracle estimator) that neural networks can hope for far better performance in terms of the estimation of the representations if a holistic pursuit is carried out, delineating exciting prospects for the deep learning community. It is true, however, that the algorithm that we presented in order to exploit this synthesis-analysis interpretation does not speak the language of neural networks, as it is greedy in nature. Exploring how similar ideas can be implemented in terms of appropriate network architectures remains a promising and interesting open question. Indeed, some initial ideas have already appeared in the recent work \cite{sulam2018multiBP}, and we believe several others will follow.


\appendix

\section{Uniqueness Revisited}
\label{app:UniquenessRevProof}

In this section, we first re-state, and later prove improved uniqueness guarantees for the ML-SC model. For the following result, we will require the dictionaries to be in \emph{general position}, as in \cite{nam2013cosparse}, in the sense that for any set of supports, $\{\Lambda_i\}_{i=1}^k$, if there exists a vector $\alfa$ satisfying $\D_{(k)}^{\Lambda_k}\alfa=\0$ and $\Phib^{\Lambda_k}\alfa=\0$, then necessarily $\alfa=\0$. Moreover, and in order to avoid trivial solutions, one should require that $|\Lambda_k| \leq \rank\{\Phib^{\Lambda_k}\} + \rank\{\D^{\Lambda_k}_{(k)}\}$. In other words, a non-zero vector -- of particular size --  cannot be in both null-spaces simultaneously. Note that this is a fair requirement as it holds for almost all set of dictionaries in a Lebesgue measure sense.

\begin{theorem} 
Consider an ML-SC signal $\x$, and a set of dictionaries $\{\D_i\}_{i=1}^{k}$ in general position. 
If there exists a set of representations, $\{\gama_i \}_{i=1}^k$ satisfying
\begin{equation}
s_k \leq \frac{\sigma(\D_{(k)}) -1 }{2} + r,
\end{equation}
where $r = \rank\{ \Phib^{\Lambda_k} \}$ and $s_k$ is number of non-zero coefficients in the deepest layer, then this set is the \emph{unique} ML-SC representation for $\x$ such that its deepest layer has no more than $s_k$ non-zeros and the rank of the corresponding $\Phib^{\Lambda_k}$ is no greater than $r$.
\end{theorem}

\begin{proof}
Let us assume that $\{\gama_{i_a}\}_{i=1}^k$ and $\{\gama_{i_b}\}_{i=1}^k$ are two different representation sets for the ML-SC signal $\x$, such that 
\begin{equation} \label{eqapp:gamma_k_sparsity_condition}
\norm{\gama_{k_a}}_0 = \norm{\gama_{k_b}}_0 = s_k,
, \end{equation} 
and 
\begin{equation} \label{eqapp:Phi_rank_condition}
\rank \{\Phib^{\Lambda_{k_a}}_a\}\leq r,\quad \rank \{\Phib^{\Lambda_{k_b}}_b\}\leq r, 
\end{equation}
where $\Phib_a$ denotes the matrix $\Phib$ (as in \Cref{phi_matrix}) for representation set $a$, and similarly for $\Phib_b$.
In addition, we remind the reader that $\Phib^{\Lambda_{k}}$ restricts $\Phib$ to the support of $\gama_k$.
Define next the union of the supports $\Lambda_{k_a}$ and $\Lambda_{k_b}$, the deepest layer supports of $a$ and $b$ respectively, to be $\Lambda_{k_U}$, i.e.,
\begin{equation}
\Lambda_{k_U} \triangleq  \Lambda_{k_a} \cup \Lambda_{k_b}.
\end{equation}
Define $h_k$ to be the cardinality of the intersection of the deepest layer supports:
\begin{equation}
h_k \triangleq \abs{\Lambda_{k_a} \cap \Lambda_{k_b}},
\end{equation}
and define $u$ to be the cardinality of their union:
\begin{equation} \label{eqapp:Uniqueness_u_upper_bound}
u \triangleq \abs{ \Lambda_{k_U} } = 2s_k - h_k
. \end{equation}
We may now use the union sub-dictionary $\D_{(k)}^{\Lambda_{k_U}}$ to express:
\begin{equation}
\x = \D_{(k)}^{\Lambda_{k_U}} \gama_{k_a}^{\Lambda_{k_U}}
	= \D_{(k)}^{\Lambda_{k_U}} \gama_{k_b}^{\Lambda_{k_U}}
, \end{equation}
where the extra elements added to these supports are simply zeros. Following \cite{nam2013cosparse}, let us now define the sub-spaces where $\gama_{k_a}^{\Lambda_{k_U}}, \gama_{k_b}^{\Lambda_{k_U}}$ lie:
\begin{eqnarray} 
\mathcal{W}_a \triangleq \left\{ \alfa : \alfa \in \ker \{ \Phib_a^{\Lambda_{k_U}} \},~ Supp(\alfa) = \Lambda_{k_a} \right\}, \\
\mathcal{W}_b \triangleq \left\{ \alfa : \alfa \in \ker \{ \Phib_b^{\Lambda_{k_U}} \},~ Supp(\alfa) = \Lambda_{k_b} \right\}.
\end{eqnarray}

We now consider a difference vector $\boldsymbol\Delta = \gama_{k_a} - \gama_{k_b}$.
In what follows, we shall find a condition under which $\boldsymbol\Delta$ cannot be other than the zero vector, implying that the representation for $\x$ must be in fact unique.

The difference vector restricted to the union of support, $\boldsymbol\Delta^{\Lambda_{k_U}}$, must satisfy two conditions:
\begin{enumerate}
\item It must lie in the null space of $\D_{(k)}^{\Lambda_{k_U}}$, since
\begin{equation}
\D_{(k)}^{\Lambda_{k_U}} \boldsymbol\Delta^{\Lambda_{k_U}} = \0.
\end{equation}
\item It must lie in the union of $\mathcal{W}_a$ and $\mathcal{W}_b$\footnote{Since $\gama_{k_a} \in \mathcal{W}_a$ and $\gama_{k_b} \in \mathcal{W}_b$, then any linear combination of them must reside in $\left( \mathcal{W}_a + \mathcal{W}_b \right)$.}:
\begin{equation}
\boldsymbol\Delta^{\Lambda_{k_U}} \in \left( \mathcal{W}_a + \mathcal{W}_b \right).
\end{equation}
\end{enumerate}
Therefore, $\boldsymbol\Delta$ ought to be zero if 
\begin{equation} \label{eqapp:Uniqueness_Condition_d} 
  \left( \mathcal{W}_a + \mathcal{W}_b \right) \cap \ker \{ \D_{(k)}^{\Lambda_{k_U}} \} = \{\0\},
\end{equation}
for all $\{\gama_{i_a}\}_{i=1}^k, \{\gama_{i_b}\}_{i=1}^k$ satisfying \Cref{eqapp:gamma_k_sparsity_condition} and \Cref{eqapp:Phi_rank_condition}.

Following \cite{nam2013cosparse}, as the dictionaries are in general position, then as long as
\begin{equation} \label{eqapp:Uniqueness_Condition_d_different_cosupport}
\dim\{ \left(\mathcal{W}_a + \mathcal{W}_b \right) \} + \dim\{ \ker (\D_{(k)}^{\Lambda_{k_U}}) \} \leq u,
\end{equation}
\Cref{eqapp:Uniqueness_Condition_d} also holds. Note that the assumption about the dictionaries being in general position is a fair one, as it holds for almost every dictionaries set in Lebesgue measure \cite{nam2013cosparse}.


We shall now elaborate on this condition by upper bounding the two elements in the left term and developing a corresponding sufficient condition for uniqueness.
We start by injecting the effective dictionary spark to upper bound $\dim\{ \ker (\D_{(k)}^{\Lambda_{k_U}}) \}$. We know that the rank of $\D_{(k)}^{\Lambda_{k_U}}$ is no less than $\min\{u, \sigma(\D_{(k)})-1 \}$, because $u$ is the number of columns in this matrix, and every $\sigma(\D_{(k)})-1$ of its columns are linearly independent by definition of the spark. Recall that, due to the rank-nullity theorem, we have 
\begin{equation}
\dim \{ \ker ( \D_{(k)}^{\Lambda_{k_U}} ) \} = u - \rank\{ \D_{(k)}^{\Lambda_{k_U}} \},
\end{equation}
and so the dimension of the kernel of $\D_{(k)}^{\Lambda_{k_U}}$ is upper bounded by
\begin{equation} \label{eqapp:effective_dictionary_kernel_upper}
\dim\{ \ker (\D_{(k)}^{\Lambda_{k_U}} ) \} \leq u - \min\{u, \sigma(\D_{(k)})-1 \} 
=  \max \{0, u - \sigma(\D_{(k)})+1 \}.
\end{equation}
On the other hand, to upper bound $\dim\{ \left(\mathcal{W}_a + \mathcal{W}_b \right) \}$, we recall that $\dim (\mathcal{W}_a)$ and $\dim (\mathcal{W}_b)$ are less or equal to $s_k-r$ (from \Cref{eqapp:Phi_rank_condition}), resulting 
\begin{equation} 
\dim\{(\mathcal{W}_a + \mathcal{W}_b )\} \leq 2s_k - 2r = u + h_k - 2r.
\end{equation}
Therefore, the corresponding sufficient condition for \Cref{eqapp:Uniqueness_Condition_d_different_cosupport} is:
\begin{equation}
\max \{0, u - \sigma(\D_{(k)})+1 \} + u + h_k - 2r \leq u
.\end{equation}
Given the $\max$ in the above expression, let us analyze both cases separately. First, if $u \leq \sigma(\D_{(k)})-1$, then, by definition of the spark, $\boldsymbol\Delta$ must be zero in order to obtain 
\begin{equation}
\0 = \D_{(k)}^{\Lambda_{k_U}} \boldsymbol\Delta.
\end{equation}
As one can see, this boils down to the traditional condition for uniqueness of sparse (synthesis) representations -- namely, that $s_k<\sigma(\D_{(k)})/2$.

On the other end, when $u > \sigma(\D_{(k)})-1$, one obtains the condition:
\begin{equation}
u + h_k \leq \sigma(\D_{(k)}) + 2 r -1,
\end{equation}
which leads to
\begin{equation}
s_k \leq \frac{\sigma(\D_{(k)})-1}{2} + r.
\end{equation}
In other words, as long as this condition holds, $\boldsymbol\Delta = \0$, and so $\gama_{i_a} = \gama_{i_b}$ $\forall i$.
\end{proof}
Before concluding this section, note that the above result was derived for the case of $\|\gama_k\|_0 = s_k$ for simplicity, though the same results hold for every $\gama_k : \|\gama_k\|_0 \leq s_k$.

\section{The Oracle Estimator Performance Proof}

\subsection{Layer-Wise: Oracle Performance} \label{appendix_layer_wise}
\begin{proof}
Recalling from \Cref{eq:layer_wise_oracle_estimator} that the Oracle estimator for the layer-wise approach is:
\begin{equation}
\begin{split}
\hat{\gama}_i^{\Lambda_i} &= {\D_i^{\Lambda_i}}^{\dagger} \hat{\gama}_{i-1}
= 
\left( {\D_i^{\Lambda_i}}^T {\D_i^{\Lambda_i}} \right)^{-1}
{\D_i^{\Lambda_{i-1},\Lambda_i}}^T \hat{\gama}_{i-1}^{\Lambda_{i-1}}
\\
& =
\left( {\D_i^{\Lambda_i}}^T {\D_i^{\Lambda_i}} \right)^{-1}
{\D_i^{\Lambda_{i-1},\Lambda_i}}^T 
\cdots
\left( {\D_1^{\Lambda_1}}^T {\D_1^{\Lambda_1}} \right)^{-1}
{\D_1^{\Lambda_1}}^T 
\y
= \U_{(i,1)} \y,
\end{split}
\end{equation}
where 
\begin{equation}
\U_{(i,j)} \triangleq \left( {\D_i^{\Lambda_i}}^T {\D_i^{\Lambda_i}} \right)^{-1}
{\D_i^{\Lambda_{i-1},\Lambda_i}}^T 
\cdots \\
\left( {\D_j^{\Lambda_j}}^T {\D_j^{\Lambda_j}} \right)^{-1}
{\D_j^{\Lambda_{j-1},\Lambda_j}}^T,
\end{equation}
and $\Lambda_0 \triangleq \{1,\ldots,n\}$. Using the fact that $\hat{\gama}_i^{\Lambda_i} = \gama_i^{\Lambda_i} + \U_{(i,1)}  \e
= \gama_i + \sigma \W_i \z $, we can write:
\begin{equation} \label{eqapp:layer_wise_oracle_proof}
\begin{split}
\E \norm{\gama_i-\hat{\gama}_i}_2^2 
&= \E \norm{\sigma \W_i \z}_2^2 \\ 
&= \sigma^2 \Trace 
	\U_{(i,1)} \U_{(i,1)}^T \\
&= \sigma^2 \Trace 
	\U_{(i,2)} \left( {\D_1^{\Lambda_1}}^T {\D_1^{\Lambda_1}} \right)^{-1}  \U_{(i,2)}^T \\
&= \sigma^2 \Trace 
	 \left( {\D_1^{\Lambda_1}}^T {\D_1^{\Lambda_1}} \right)^{-1}  \U_{(i,2)}^T \U_{(i,2)} .\\
\end{split}
\end{equation}
We shell now use the fact that for every symmetric positive definite matrices $\A, \B$ the following holds:
\begin{equation} \label{eqapp:trace_property}
\Trace \A \B = \Trace \lambda^{\min}_{\A} \B + 
\Trace \left( \A - \lambda^{\min}_{\A} \I \right)\B
\geq
\lambda^{\min}_{\A} \Trace \B,
\end{equation}
where $\lambda^{\min}_{\A}$ is the minimal eigenvalue of $\A$. The inequality is true because $\A - \lambda^{\min}_{\A} \I$ is a semi-positive symmetric matrix, resulting that the matrix $\left( \A - \lambda^{\min}_{\A} \I \right)\B$ is a semi-positive symmetric matrix, and therefore, its trace, which equals to the eigenvalues sum, is bigger than 0. In our case, $\A$ is the matrix $\left( {\D_1^{\Lambda_1}}^T {\D_1^{\Lambda_1}} \right)^{-1}$, and $\B$ is the matrix $\U_{(i,2)}^T \U_{(i,2)}$. In addition, using the RIP of dictionary $\D_1$, we know that the eigenvalues of $\left( {\D_1^{\Lambda_1}}^T {\D_1^{\Lambda_1}} \right)^{-1}$ are no less than $\frac{1}{1+\delta_{s_1}^{\D_1}}$. Therefore, we can lower bound the recovery error, 
\begin{equation}
\begin{split}
& \E \norm{\gama_i-\hat{\gama}_i}_2^2 
 \geq \sigma^2 \frac{1}{1+\delta_{s_1}^{\D_1}} 
    \Trace \U_{(i,2)}^T \U_{(i,2)}
\\
& = 
	\sigma^2 \frac{1}{1+\delta_{s_1}^{\D_1}} 
    \Trace 
	\U_{(i,3)}^T  
    \left( {\D_2^{\Lambda_2}}^T {\D_2^{\Lambda_2}} \right)^{-1}
\cdot {\D_2^{\Lambda_{1},\Lambda_2}}^T 
	{\D_2^{\Lambda_{1},\Lambda_2}}
    \left( {\D_2^{\Lambda_2}}^T {\D_2^{\Lambda_2}} \right)^{-1}
    \U_{(i,3)}
     \\
& =
	\sigma^2 \frac{1}{1+\delta_{s_1}^{\D_1}} 
    \Trace 
{\D_2^{\Lambda_{1},\Lambda_2}}^T 
	{\D_2^{\Lambda_{1},\Lambda_2}}
    \left( {\D_2^{\Lambda_2}}^T {\D_2^{\Lambda_2}} \right)^{-1}
    \cdot \U_{(i,3)}
    \U_{(i,3)}^T  
    \left( {\D_2^{\Lambda_2}}^T {\D_2^{\Lambda_2}} \right)^{-1}.
\\
\end{split}
\end{equation}
Using again the fact which was introduce in  \Cref{eqapp:trace_property} and the Subset RIP of $\D_2^{\Lambda_{1},\Lambda_2}$, we can write:
\begin{equation}
\begin{split}
&  \E \norm{\gama_i-\hat{\gama}_i}_2^2 \\
& \geq 
	\sigma^2 \frac{1}{1+\delta_{s_1}^{\D_1}} 
    \left( \frac{s_1}{n_1} - \delta_{s_1,s_2}^{\D_2} \right)
    \Trace 
    \left( {\D_2^{\Lambda_2}}^T {\D_2^{\Lambda_2}} \right)^{-1}
    \cdot \U_{(i,3)}
    \U_{(i,3)}^T  
    \left( {\D_2^{\Lambda_2}}^T {\D_2^{\Lambda_2}} \right)^{-1}
\\
& \geq 
	\sigma^2 \frac{1}{1+\delta_{s_1}^{\D_1}} 
    \frac{\frac{s_1}{n_1} - \delta_{s_1,s_2}^{\D_2}}{\left(1+\delta_{s_2}^{\D_2}\right)^2}
    \Trace 
    \U_{(i,3)}
    \U_{(i,3)}^T  
\\
& \vdots
\\
& \geq 
	\sigma^2 \frac{1}{1+\delta_{s_1}^{\D_1}} 
    \prod_{j=2}^{i-1}
    \frac{\frac{s_{j-1}}{n_{j-1}} - \delta_{s_{j-1},s_j}^{\D_j} }
    {\left(1+\delta_{s_j}^{\D_j}\right)^2}
    \Trace 
    \U_{(i,i)}
    \U_{(i,i)}^T  
\\
& \geq 
	\sigma^2 \frac{1}{1+\delta_{s_1}^{\D_1}} 
    \prod_{j=2}^{i-1}
    \frac{\frac{s_{j-1}}{n_{j-1}} - \delta_{s_{j-1},s_j}^{\D_j} }
    {\left(1+\delta_{s_j}^{\D_j}\right)^2}
    \left(
    \frac{s_{i-1}}{n_{i-1}} - \delta_{s_{i-1},s_i}^{\D_i} 
    \right)
    \cdot \Trace 
    \left( {\D_i^{\Lambda_i}}^T {\D_i^{\Lambda_i}} \right)^{-2}
\\
& \geq 
	\sigma^2 s_i
    \frac{1}{1+\delta_{s_1}^{\D_1}} 
    \prod_{j=2}^{i}
    \frac{\frac{s_{j-1}}{n_{j-1}} - \delta_{s_{j-1},s_j}^{\D_j} }
    {\left(1+\delta_{s_j}^{\D_j}\right)^2} .
\\
\end{split}
\end{equation}

We omit the upper bound proof since it is equivalent to the above lower bound proof apart from changing signs ('-' to '+' and opposite).
\end{proof}

\subsection{Projection: Oracle Performance} \label{appendix_projection}
We next derive the performance bounds for the projection Oracle estimator \Cref{projection_bounds}.
\begin{proof}
The bound for the first step of the projection algorithm which is the deepest layer estimation, is the single-layer bound of the effective model. 
In order to bound the recovery error of the intermediate layers, $\{\gama_i\}_{i=1}^{k-1}$, we use the connection which was present in  \Cref{mid_layers_projection_connection}. For convenience sake we define
\begin{equation}
\overline{\D}_{i} = \D_{i}^{\Lambda_{i-1}, \Lambda_{i}}
\cdots
\D_{k}^{\Lambda_{k-1}, \Lambda_{k}} 
~~
\forall ~
2 \leq i \leq k .
\end{equation}
Therefore, we can write:
\begin{equation}
\begin{split}
\E \norm{\gama_i-\hat{\gama}_i}_2^2 
&= \E \norm{\sigma \overline{\D}_{i+1} 
{\D_{(k)}^{\Lambda_k}}^{\dagger} \z}_2^2 
\\ 
&= \sigma^2 \Trace 
    {{\D_{(k)}^{\Lambda_k}}^{\dagger} }^T
	{\overline{\D}_{i+1}}^T
    \overline{\D}_{i+1}
    {\D_{(k)}^{\Lambda_k}}^{\dagger} 
    \\
&= \sigma^2 \Trace 
	\left( {{\D_{(k)}^{\Lambda_k}}}^T 
    {\D_{(k)}^{\Lambda_k}} \right)^{-1}
	{\overline{\D}_{i+1}}^T
    \overline{\D}_{i+1} .
\\
\end{split}
\end{equation}
Using \Cref{eqapp:trace_property} and the RIP of the matrix $\D_{(k)}^{\Lambda_k}$, we can lower bound the recovery error, 
\begin{equation}
\begin{split}
\E \norm{\gama_i-\hat{\gama}_i}_2^2 
& \geq 
	\frac{1}{1+\delta_{s_k}^{\D_{(k)}}}
    \Trace 
    	{\overline{\D}_{i+1}}^T
    \overline{\D}_{i+1}
\\
& =
	\frac{1}{1+\delta_{s_k}^{\D_{(k)}}}
    \Trace 
    {\D_{i+1}^{\Lambda_i, \Lambda_{i+1}}}^T \D_{i+1}^{\Lambda_i, \Lambda_{i+1}}
    \\ & \qquad 
	\cdot \overline{\D}_{i+2}
    {\overline{\D}_{i+2}}^T .
    \\
\end{split}
\end{equation}
Using again the connection from  \Cref{eqapp:trace_property} and the Subset RIP of $\D_{i+1}^{\Lambda_i, \Lambda_{i+1}}$, we can write:
\begin{equation}
\begin{split}
\E \norm{\gama_i-\hat{\gama}_i}_2^2 
& \geq
	\frac{1}{1+\delta_{s_k}^{\D_{(k)}}}
    \left(
    \frac{s_i}{m_i}-\delta_{s_{i},s_{i+1}}^{\D_{i+1}}
    \right)
    \Trace 
	\overline{\D}_{i+2}^T
    {\overline{\D}_{i+2}}
\\
& \vdots
\\
& \geq 
	\sigma^2
\frac{\prod_{j=i+1}^{k-1} 
\left(
\frac{s_{j-1}}{m_{j-1}} - \delta_{s_{j-1},s_j}^{\D_{j}}
\right)}{1+\delta_{s_k}^{\D_{(k)}}}
    \cdot 
    \Trace 
	{\D_{k}^{\Lambda_{k-1}, \Lambda_{k}}}^T
    \D_{k}^{\Lambda_{k-1}, \Lambda_{k}}
\\
& \geq 
\sigma^2
s_k
\frac{\prod_{j=i+1}^{k} 
\left(
\frac{s_{j-1}}{m_{j-1}} - \delta_{s_{j-1},s_j}^{\D_{j}}
\right)}{1+\delta_{s_k}^{\D_{(k)}}}
\end{split}
\end{equation}

The upper bound proof is the same as the lower bound proof with changing signs ('-' to '+' and opposite).
\end{proof}

\section{Holistic Pursuit Algorithm Analysis}\label{app:HolisticAnalysis}

\begin{lemma} \label{eqapp:lemma_threshold}
Let $\hat{\gama}_k = \gama_k + \e$ be the estimation of the deepest layer, let $\hat{\Lambda}_i^c$ be the estimated co-support in the $i^{th}$ layer, where
\begin{equation}
\sum_{i=1}^{k-1} \abs{ \hat{\Lambda}_i^c} = j-1,
\end{equation}
and let $\mu^i_R$ be the \emph{row-wise} mutual-coherence defined in \Cref{eq:rows_mutual_coherence}.
If
\begin{equation}
\norm{\e}_2
\leq
\max_{i : 
\abs{\hat{\Lambda}_i^c}< \ell_i
}
\gama_i^{\min} 
\left(1 + \frac{1 + \mu^i_R \left(
\ell_i -\abs{\hat{\Lambda}_i^c} -1
\right)}{\sqrt{\ell_i - \abs{\hat{\Lambda}_i^c} }}\right)^{-1}
\end{equation}
then the Holistic Pursuit algorithm succeeds in its $j^{th}$ iteration in recovering a new element from the mid-layers co-support.
\end{lemma}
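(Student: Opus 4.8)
The plan is to read this as a one-step correctness guarantee for the analysis (co-support detection) stage, in the spirit of the cosparse thresholding analysis of \cite{peleg2013performance}. I work under the inductive hypothesis (maintained across the holistic iterations) that the $j-1$ co-support elements collected so far are all genuine, so that for each layer $i$ the still-unidentified true zeros of $\gama_i=\D_{(i+1,k)}\gama_k$ number exactly $L_i \triangleq \ell_i-\abs{\hat{\Lambda}_i^c}$; I also take the rows of each effective dictionary $\D_{(i+1,k)}$ to be $\ell_2$-normalized, writing $\d_p$ for the $p$-th row as in \Cref{eq:rows_mutual_coherence}, so that $\mu^i_R$ is a genuine row coherence and $\abs{\d_p\e}\le\norm{\e}_2$ for every single row.

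First I would reduce the multi-layer claim to a single layer. Since the algorithm picks $g$ as the $\argmax$ in \Cref{Holistic_Pursuit_Choose_Layer_g}, while the hypothesis bounds $\norm{\e}_2$ by the $\max$ of the very same expression, we obtain at the chosen layer $g$
\begin{equation}
\norm{\e}_2 \le \gama_g^{\min}\left(1+\frac{1+\mu^g_R(L_g-1)}{\sqrt{L_g}}\right)^{-1},
\end{equation}
which I rearrange into the equivalent separation inequality $\frac{1+\mu^g_R(L_g-1)}{\sqrt{L_g}}\,\norm{\e}_2 \le \gama_g^{\min}-\norm{\e}_2$. It therefore suffices to show that on layer $g$ the index attaining $\argmin_{p\in\hat{\Lambda}_g}\abs{\d_p\hat{\gama}_k}$ is a true zero of $\gama_g$.

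Next I would invoke the standard support/co-support dichotomy for the scores $\abs{\d_p\hat{\gama}_k}=\abs{(\gama_g)_p+\d_p\e}$. For a genuine support index $p$ (a true nonzero) I lower-bound the score by $\abs{(\gama_g)_p}-\abs{\d_p\e}\ge \gama_g^{\min}-\norm{\e}_2$. For a still-unidentified genuine co-support index the model gives $(\gama_g)_p=0$, so the score equals $\abs{\d_p\e}$, and the delicate point is to certify that the \emph{smallest} of these $L_g$ co-support scores lies below $\gama_g^{\min}-\norm{\e}_2$. The crux is thus an upper bound on $\min_{p}\abs{\d_p\e}$ over those rows, for which I would (i) pass from the minimum to the root-mean-square, $\min_{p}\abs{\d_p\e}\le \frac{1}{\sqrt{L_g}}\norm{\D_{(g+1,k)}^{\hat{\Lambda}_g^{0},\mathcal{I}}\e}_2$, where $\hat{\Lambda}_g^{0}$ collects the $L_g$ remaining true zeros, and (ii) bound that energy by a coherence estimate: a Gershgorin argument on the $L_g\times L_g$ Gram matrix of those rows (unit diagonal, off-diagonal at most $\mu^g_R$) controls its top eigenvalue and produces the factor $1+\mu^g_R(L_g-1)$. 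Combining (i)--(ii) yields $\min_{p}\abs{\d_p\e}\le \frac{1+\mu^g_R(L_g-1)}{\sqrt{L_g}}\,\norm{\e}_2$.

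Finally I would chain the estimates: by the rearranged hypothesis this co-support minimum is at most $\gama_g^{\min}-\norm{\e}_2$, which is in turn a lower bound for every support score, so the global minimum over $\hat{\Lambda}_g$ is attained at a true zero and the minimum-selection step of \Cref{Holistic_Pursuit_Algorithm} returns a correct new co-support element, proving the lemma. I expect step (ii) to be the main technical obstacle: the Gershgorin spectral bound natively gives $\sqrt{1+\mu^g_R(L_g-1)}$, and reconciling this with the cleaner (looser, but still valid since $\sqrt{1+x}\le 1+x$) factor $1+\mu^g_R(L_g-1)$ stated here requires care, as does the bookkeeping ensuring exactly $L_g$ genuine zeros remain available on layer $g$; both are handled following \cite{peleg2013performance}.
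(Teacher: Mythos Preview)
Your proposal is correct and follows the same overall architecture as the paper's proof: reduce to the layer $g$ selected by \Cref{Holistic_Pursuit_Choose_Layer_g}, lower-bound every support score by $\gama_g^{\min}-\norm{\e}_2$ via Cauchy--Schwarz, upper-bound the smallest co-support score, and combine. The only substantive difference is in how that co-support upper bound is obtained. The paper argues by explicit worst-case construction: it exhibits the vector $\tilde{\e}=\frac{\norm{\e}_2}{\sqrt{L_g}}\bigl(\D_{(g+1,k)}^{\Lambda_g^c\setminus\hat{\Lambda}_g^c}\bigr)^T\mathbf{1}$ and evaluates the resulting equal inner products, first in the orthonormal case and then under a coherence bound. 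You instead pass from the minimum to the root-mean-square and control $\norm{\D_{(g+1,k)}^{\hat{\Lambda}_g^{0},\mathcal{I}}\e}_2$ via a Gershgorin bound on the $L_g\times L_g$ row-Gram matrix. Your route is arguably cleaner: the paper's $\tilde{\e}$ does not actually have norm $\norm{\e}_2$ once the rows are non-orthogonal, so its final inequality is somewhat heuristic, whereas your spectral argument is fully rigorous and even yields the tighter factor $\sqrt{1+\mu_R^g(L_g-1)}$ before you relax it to match the stated bound. Both derivations land on the same sufficient condition, so the difference is purely technical.
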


\begin{proof}
Let $i$ be 
\begin{equation}
\label{eqapp:lemma2_i_definition}
i =
\underset{ i : ~ 
\abs{\hat{\Lambda}_i^c}< \ell_i
}{\argmax} ~~
\gama_i^{\min} 
\left(1 + \frac{1 + \mu^i_R \left(
\ell_i -\abs{\hat{\Lambda}_i^c} -1
\right)}{\sqrt{\ell_i - \abs{\hat{\Lambda}_i^c} }}\right)^{-1} .
\end{equation}
The algorithm succeeds in its $j^{th}$ iteration if
\begin{equation} \label{eqapp:lemma2_success_condition}
\min \abs{\D_{(i+1,k)}^{\Lambda_i^c / \hat{\Lambda}_i^c} \hat{\gama}_k} < 
\min \abs{\D_{(i+1,k)}^{\Lambda_i} \hat{\gama}_k},
\end{equation}
where $\Lambda_i^c / \hat{\Lambda}_i^c$ is the set of the unfounded co-support elements in layer $i$.
Since the left term is only for co-support rows, we can simplify the left term:
\begin{equation} \label{eqapp:lemma2_success_condition_left_term}
\min \abs{\D_{(i+1,k)}^{\Lambda_i^c / \hat{\Lambda}_i^c} \hat{\gama}_k} = 
\min \abs{\D_{(i+1,k)}^{\Lambda_i^c / \hat{\Lambda}_i^c} \e}.
\end{equation}
In order to upper bound \Cref{eqapp:lemma2_success_condition_left_term}, we look for the error vector $\e$ that maximizes this term. Let us first assume that the rows in $\D_{(i+1,k)}^{\Lambda_i^c / \hat{\Lambda}_i^c}$ are orthonormal. In this simplified case, the error that maximizes \Cref{eqapp:lemma2_success_condition_left_term} is the vector obtained by the average distance to every row of $\D_{(i+1,k)}^{\Lambda_i^c / \hat{\Lambda}_i^c}$, having $\sqrt{\ell_i-|\hat{\Lambda}_i^c|}$ of these. In other words, we look for the error vector $\tilde{\e}$ such that 
\begin{equation} \label{eqapp:eq:error_maximize_inner_product}
\tilde{\e} = \frac{\norm{\e}_2}{ \sqrt{ \ell_i -\abs{\hat{\Lambda}_i^c} } }
(\D_{(i+1,k)}^{\Lambda_i^c / \hat{\Lambda}_i^c} )^T \1.
\end{equation}
Such an error vector leads to the following upper bound:
\begin{equation}
\min \abs{\D_{(i+1,k)}^{\Lambda_i^c / \hat{\Lambda}_i^c} \e}
\leq 
\min \abs{\D_{(i+1,k)}^{\Lambda_i^c / \hat{\Lambda}_i^c} \tilde{\e}} =
\frac{\norm{\e}_2}{ \sqrt[]{ \ell_i -\abs{\hat{\Lambda}_i^c} } }.
\end{equation}

Consider now the more general case, where the rows of $\D_{(i+1,k)}^{\Lambda_i^c / \hat{\Lambda}_i^c}$ are not orthogonal, but rather the correlation between every two rows is upper bounded by $\mu^i_R$.
Now, in order to bound the minimal correlation between the noise vector and every atom, we might consider the worst-case scenario where the inner product between any pair of atoms is equal to $\mu^i_R$. In such a case, $\e$ in \Cref{eqapp:eq:error_maximize_inner_product} maximizes \Cref{eqapp:lemma2_success_condition_left_term}, and one thus obtains the following upper bound:
\begin{equation}\label{eqapp:lemma2_right_term_bound}
\min \abs{\D_{(i+1,k)}^{\Lambda_i^c / \hat{\Lambda}_i^c} \e}
\leq \norm{\e}_2
\frac{ \left(1 + \mu^i_R \left(
\ell_i -\abs{\hat{\Lambda}_i^c} -1
\right)
\right)
}{ \sqrt{ \ell_i -\abs{\hat{\Lambda}_i^c} } }.
\end{equation}

For the right term of \Cref{eqapp:lemma2_success_condition} we use  the same derivations as in \cite{peleg2013performance}, resulting in
\begin{equation}
\label{eqapp:lemma2_left_term_bound}
\min \abs{\D_{(i+1,k)}^{\Lambda_i} \hat{\gama}_k}
\geq
\gama_i^{\min} - 
\max \abs{\D_{(i+1,k)}^{\Lambda_i} \e}
\geq
\gama_i^{\min} - 
\norm{\e}_2,
\end{equation}
where the last inequality follows from Cauchy-Schwarz inequality and the fact that all rows in $\D_{(i+1,k)}$ are assumed to be normalized.
Using \Cref{eqapp:lemma2_right_term_bound} and \Cref{eqapp:lemma2_left_term_bound}, one arrives to a sufficient condition for the success of the algorithm, in the form of:
\begin{equation}
\norm{\e}_2
\leq
\gama_i^{\min} 
\left(1 + \frac{1 + \mu^i_R \left(
\ell_i -\abs{\hat{\Lambda}_i^c} -1
\right)}{\sqrt{\ell_i - \abs{\hat{\Lambda}_i^c} }}\right)^{-1}.
\end{equation}
Bringing \Cref{eqapp:lemma2_i_definition} provides the claimed lemma.
\end{proof}

\section{Numerical Experiments}

In this section, we expand on the experimental results presented in \Cref{Sec:Experiments}

\begin{figure}[h!]
\centering
\includegraphics[width = .6\textwidth]{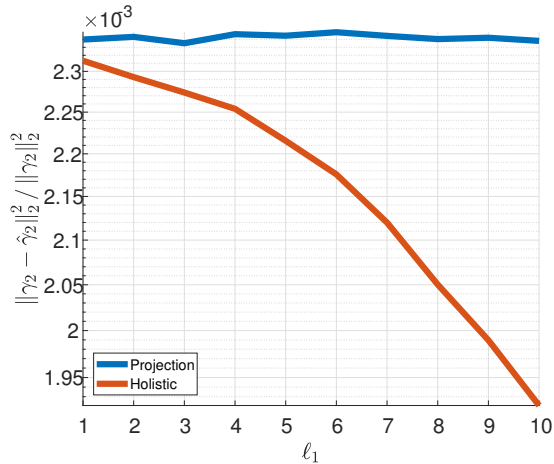}
\caption{Recovery error for the Holistic Pursuit algorithm and the outer-layer projection where the difference $s_2 - \ell_1 = 1$.}
\label{fig:HolisticPursuit_15db}
\end{figure}

\bibliographystyle{siamplain}
\bibliography{bibliography}
\end{document}